\newtheorem{property}{Property}
\newtheorem{definition}{Definition}
\newtheorem{example}{Example}
\newtheorem{proposition}{Proposition}
\newtheorem{remark}{Remark}
\def\calL{\mathcal{L}}
\def\GL{\mathrm{GL}}
\def\barf{{\bar{f}}}
\def\barg{{\bar{g}}}
\def\barF{{\bar{F}}}
\def\barG{{\bar{G}}}
\def\bartheta{{\bar\theta}}
\def\bareta{{\bar\eta}}
\def\barx{{\bar x}}
\def\bary{{\bar y}}
\def\bbR{\mathbb{R}}
\def\inner#1#2{{\langle #1,#2\rangle}}
\def\vecthree#1#2#3{{\left[\begin{array}{c}#1\cr #2\cr #3\end{array}\right]}}
\newenvironment{proof}{\paragraph{Proof:}}{\hfill$\square$}
\def\SepAnd{;\ }
\def\keywords#1{\vskip 0.5cm\noindent {\bf Keywords}: #1}
\title{Beyond scalar quasi-arithmetic means: Quasi-arithmetic averages and quasi-arithmetic mixtures in information geometry\footnote{A preliminary version appeared in~\cite{qac-2023} with technical report~\protect\url{http://arxiv.org/abs/2301.10980}}}
\author{Frank Nielsen\\ \ \\ Sony Computer Science Laboratories Inc.\\ Tokyo, Japan}
\date{}
\begin{document}

\sloppy
\def\calP{\mathcal{P}}
\def\bareta{{\bar\eta}}
\def\calM{\mathcal{M}}
\def\bbX{\mathbb{X}}
\def\bbR{\mathbb{R}}
\def\barG{{\bar G}}
\def\bartheta{{\bar\theta}}
\def\barF{{\bar F}}
\def\calC{\mathcal{C}}
\def\GL{\mathrm{GL}}
\def\calG{\mathcal{G}}
\def\LSE{\mathrm{LSE}}
\def\KL{\mathrm{KL}}
\def\JS{\mathrm{JS}}
\def\barM{{\bar M}}
\def\calE{\mathcal{E}}
\def\calX{\mathcal{X}}
\def\calF{\mathcal{F}}
\def\dP{\mathrm{d}P}
\def\dQ{\mathrm{d}Q}
\def\dmu{\mathrm{d}\mu}
\def\bbR{\mathbb{R}}
\def\du{\mathrm{d}u}
\def\SPD{\mathrm{Sym}_{++}(d)}
\def\det{\mathrm{det}}
\def\tr{\mathrm{tr}}
\def\inner#1#2{\left\langle #1,#2 \right\rangle}
\def\idfunc{\mathrm{id}}
\def\vectortwo#1#2{{\left[\begin{array}{ll}#1 \cr #2\end{array}\right]}}
\def\CM{\mathcal{CM}}
\def\dlambda{\mathrm{d}\lambda}
\def\vectorcol#1#2#3{\left[\begin{array}{c}#1\cr #2\cr #3\end{array}\right]}
\def\st{\ :\ }
\def\etabar{\bar\eta}
\def\DFS{\mathrm{DFS}}
\def\Sym{\mathrm{Sym}}
\def\IE{\mathrm{IE}}
\def\labelg#1{\stackrel{#1}{g}}
\def\qam{m}
\def\qaa{M}
\def\qamix{m}

\maketitle              
\begin{abstract}
We generalize quasi-arithmetic means beyond scalars by considering continuously invertible gradient maps of strictly convex  Legendre type real-valued functions. 
Gradient maps of strictly convex Legendre type functions are strictly comonotone and admits a global inverse,
thus   generalizing the notion of strictly mononotone and differentiable functions used to define scalar quasi-arithmetic means.
Furthermore, the Legendre transformation gives rise to pairs of {\em dual quasi-arithmetic averages} via the convex duality.
We study both the invariance and equivariance properties under affine transformations of quasi-arithmetic averages via the lens of dually flat spaces of information geometry. 
We show how these quasi-arithmetic averages are used to express points on dual geodesics and sided barycenters in the dual affine coordinate systems.
Finally, we consider quasi-arithmetic mixtures and describe several parametric and non-parametric statistical models which are closed under the quasi-arithmetic mixture operation.

\keywords{quasi-arithmetic mean \SepAnd  Legendre transform \SepAnd Legendre-type function \SepAnd information geometry \SepAnd affine Legendre invariance  \SepAnd Jensen divergence \SepAnd comparative convexity \SepAnd Jensen-Shannon divergence}
\end{abstract}
\section{Introduction}


We first start by generalizing the notion of quasi-arithmetic means~\cite{Inequalities-1952} (Definition~\ref{def:qam}) which relies on strictly monotone and differentiable functions to other non-scalar types such as vectors or matrices in Section~\ref{sec:qaaig}: 
Namely, we show how the gradient of a strictly convex and differentiable function of Legendre type~\cite{Rockafellar-1967} (Definition~\ref{def:Legendre}) is co-monotone (Proposition~\ref{prop:gradcomonotone}) and admits a continuous global inverse.
Legendre type functions bring the counterpart notion of quasi-arithmetic mean generators to non-scalar types that we term 
{\em quasi-arithmetic averages} (Definition~\ref{def:mqam}).
In Section~\ref{sec:qaaigd}, we show how quasi-arithmetic averages occur naturally in the dually flat manifolds of information geometry~\cite{amari1985differential,IG-2016}:
Quasi-arithmetic averages are used to express the coordinates of (1) points on dual geodesics (\S\ref{sec:geo}) and (2)
dual barycenters with respect to the canonical divergence which amounts to a Bregman divergence~\cite{amari1985differential} (\S\ref{sec:bary}).
We explain the dualities between steep exponential families~\cite{BN-2014}, regular Bregman divergences~\cite{banerjee2005clustering}, and quasi-arithmetic averages in Section~\ref{sec:dualQAA} and interpret the calculation of the induced geometric matrix mean using quasi-arithmetic averages in Section~\ref{sec:inductive}.
The invariance and equivariance properties of quasi-arithmetic averages are studied in Section~\ref{sec:invariance}
 under the framework of information geometry: 
The invariance and equivariance of quasi-arithmetic averages under affine transformations (Proposition~\ref{prop:equiinvariance}) generalizes the invariance property of quasi-arithmetic means (Property~\ref{propr:invarqam}) and bring new insights from the
  information-geometric viewpoint.
Finally, in Section~\ref{sec:qamJD}, we define quasi-arithmetic mixtures (Definition~\ref{def:qamix}), show their potential role in defining a generalization of Jensen-Shannon divergence~\cite{JS-2019}, and discusses the  underlying information geometry of parametric and non-parametric statistical models closed under the operation of taking quasi-arithmetic mixtures.
We propose a geometric generalization of the Jensen-Shannon divergence (Definition~\ref{eq:geojsd}) based on affine connections~\cite{IG-2016} in Section~\ref{sec:nablaJS} which recovers the ordinary Jensen-Shannon divergence and the geometric Jensen-Shannon divergence=\cite{JS-2019} when the affine connections are chosen as the mixture connection $\nabla^m$ and the exponential connection $\nabla^e$ of information geometry, respectively.

\section{Quasi-arithmetic averages and information geometry}\label{sec:qaaig}

\subsection{Scalar quasi-arithmetic means}\label{sec:QAM}

Let $\Delta_{n-1}=\{(w_1,\ldots,w_n) \st w_i\geq 0, \sum_{i=1}^n w_i=1\}\subset\bbR^n$ denotes the closed $(n-1)$-dimensional standard simplex sitting in $\bbR^n$ and 
$\Delta_{n-1}^\circ=\Delta_{n-1}\backslash\partial\Delta_{n-1}$ the open standard simplex where $\partial$ denotes the topological set boundary operator.
Weighted quasi-arithmetic means~\cite{Inequalities-1952} generalize the ordinary weighted arithmetic mean 
$A(x_1,\ldots,x_n;w)=\sum_i w_i x_i$ as follows:

\begin{definition}[Weighted quasi-arithmetic mean]\label{def:qam}
Let $f:I\subset\bbR\rightarrow\bbR$ be a strictly monotone and differentiable real-valued function. 
The weighted quasi-arithmetic mean (QAM) $m_f(x_1,\ldots,x_n;w)$ between $n$ scalars $x_1,\ldots,x_n\in  I\subset\bbR$ with respect to a normalized weight vector $w\in\Delta_{n-1}$, is defined by
$$
m_{f}(x_1,\ldots,x_n;w) := f^{-1}\left(\sum_{i=1}^n w_i f(x_i)\right).
$$
\end{definition}
The notion of quasi-arithmetic means and its properties were historically defined and studied independently by Knopp~\cite{Knopp-1928}, Jessen~~\cite{Jessen-1931},
Kolmogorov~\cite{Kolmogorov-1930}, Nagumo~\cite{Nagumo-1930} and De Finetti~\cite{de1931sul} in the  late 1920's-early 1930's 
(see also Acz{\'e}l~\cite{aczel1948mean}).
These quasi-arithmetic means are thus sometimes referred to in the literature Kolmogorov-Nagumo means~\cite{komori2021unified,czachor2002thermostatistics} or Kolmogorov-Nagumo-De Finetti means~\cite{bliman2022tiered}. 

Let us write for short $m_{f}(x_1,\ldots,x_n):=m_{f}(x_1,\ldots,x_n;\frac{1}{n},\ldots,\frac{1}{n})$ the quasi-arithmetic mean, and $m_{f,\alpha}(x,y):=m_f(x,y;\alpha,1-\alpha)$, the weighted bivariate mean. 
Mean $m_{f}(x_1,\ldots,x_n) $ is called a quasi-arithmetic mean because we have:
$$
f(m_f(x_1,\ldots,x_n))=\frac{1}{n} \sum_i f(x_i)=A(f(x_1),\ldots,f(x_n)),
$$ 
the arithmetic mean with respect to the $f$-representation~\cite{zhang2013nonparametric} of scalars.
A QAM has also been called a $f$-mean  in the literature (e.g.,~\cite{abramovich2005variant}) to emphasize its underlying generator $f$.
A QAM like any other generic mean~\cite{Bullen-2013} satisfies the {\em in-betweenness property}: 
$$
\min\{x_1,\ldots,x_n\}\leq m_{f}(x_1,\ldots,x_n;w)\leq \max\{x_1,\ldots,x_n\}.
$$
See also the recent works on aggregators~\cite{calvo2002aggregation}.
QAMs have been used in machine learning (e.g.,~\cite{komori2021unified}) and statistics (e.g.,~\cite{ballester2007using}).

We have the following invariance property of QAMs:

\begin{property}[Invariance of quasi-arithmetic mean~\cite{ConvexBook-2006}]\label{propr:invarqam}
 $m_{g}(x,y)=m_{f}(x,y)$ if and only if $g(t)=\lambda f(t)+c$ for $\lambda\in\bbR\backslash\{0\}$ and $c\in\bbR$.
\end{property}
See~\cite{bajraktarevic1958equation,losonczi1999equality} for the more general case of invariance of weighted quasi-arithmetic means with weights defined by functions.

Let $\CM(a,b)$ denotes the class of  continuous strictly monotone functions on $[a,b]$, and $\sim$ the equivalence relation $f\sim g$ 
if and only if $m_f=m_g$. Then the quasi-arithmetic mean induced by $f$ is $m_{[f]}$ where $[f]$ denotes the equivalence class of functions in $\CM(a,b)$ which contains $f$.
When $f(t)=t$, we recover the arithmetic mean $A$: $m_{\mathrm{id}}(x,y)=A(x,y)$ where $\mathrm{id}(x)=x$ is the scalar identity function.

The power means $m_p(x,y):=m_{f_p}(x,y)=\left(\frac{x^p+y^p}{2}\right)^{\frac{1}{p}}$, also called H\"older~\cite{pales2004hardy,zhang2009convexity} or sometimes Minkowski means~\cite{angulo2013morphological}, are 
obtained for the following {\em continuous family} of QAM generators $f_p(t)$ index by $p\in\bbR$: 
$$
f_p(t)=
\left\{
\begin{array}{ll}
\frac{t^p-1}{p}, & p\in\bbR\backslash\{0\},\cr
\log(t), & p=0.
\end{array}
\right., \quad
f_p^{-1}(t)=
\left\{
\begin{array}{ll}
(1+tp)^{\frac{1}{p}}, & p\in\bbR\backslash\{0\},\cr
\exp(t), & p=0.
\end{array}
\right.,
$$
Special cases of the power means are the harmonic mean ($H=m_{-1}$), the geometric mean ($G=m_{0}$), 
the arithmetic mean ($A=m_{1}$), and the quadratic mean ($Q=m_{2}$).

A QAM is said {\em positively homogeneous} if and only if $m_f(\lambda x,\lambda y)=\lambda\, m_f(x,y)$ for all $\lambda>0$.
The power means $m_p$ are provably the only positively homogeneous QAMs~\cite{Inequalities-1952}.

QAMs provide a versatile way to construct means~\cite{Bullen-2013} by specifying a functional generator $f\in\CM(I)$.
For example, the log-sum-exp mean\footnote{Also called the exponential mean~\cite{Bullen-2013} since it is a $f$-mean for the exponential function.} is obtained for the QAM generator $f_\LSE(t)=\exp(u)=f_0^{-1}(t)$ with $f^{-1}_\LSE(t)=\log u=f_0(t)$ (notice that these functions are the inverse of the geometric mean functions):
$$
\LSE(x,y)=\log\left(\frac{\exp^x+\exp^y}{2}\right)=m_{f_\LSE}(x,y).
$$ 

Quasi-arithmetic means have been generalized to complex-valued generators in~\cite{akaoka2022limit} and operators in~\cite{micic2011jensen}.

\subsection{Quasi-arithmetic averages}\label{sec:QAA}

To generalize scalar QAMs to other non-scalar types such as vectors or matrices, we have to face two difficulties:
\begin{enumerate}
\item First, we need to ensure that the generator $G:\bbX\rightarrow\bbR$ admits a continuously smooth global inverse $G^{-1}$, and 
\item Second, we would like the smooth function $G$ to bear a generalization of monotonicity of univariate functions.
\end{enumerate}

Indeed, the inverse function theorem~\cite{krantz2002implicit,InverseFunctionThm-1976}  in multivariable calculus states only  the existence {\em locally} of an inverse continuously differentiable function $G^{-1}$ for a multivariate function $G$ provided that the  Jacobian matrix of $G$ is not singular (i.e., Jacobian matrix has non-zero determinant).

We shall thus consider a well-behaved class $\calF$ of non-scalar  functions $G$ (i.e., vector or matrix functions) which admits global inverse functions $G^{-1}$ belonging to the same class $\calF$: Namely, we consider the gradient maps of Legendre-type functions where Legendre-type functions are defined as follows:

\begin{definition}[Legendre type function~\cite{Rockafellar-1967}]\label{def:Legendre}
$(\Theta,F)$ is of Legendre type if the function $F:\Theta\subset\bbX\rightarrow\bbR$ is strictly convex and differentiable with $\Theta\not=\emptyset$ and
\begin{equation}\label{eq:cond}
\lim_{\lambda\rightarrow 0} \frac{d}{\dlambda} F(\lambda\theta+(1-\lambda)\bar\theta)=-\infty,\quad \forall\theta\in\Theta, \forall\bar\theta\in\partial\Theta.
\end{equation}
\end{definition}
The condition of Eq.~\ref{eq:cond} is related to the notion of steepness in exponential families~\cite{BN-2014}.

Legendre-type functions $F(\Theta)$ admits a convex conjugate $F^*(\eta)$ via the
Legendre transform
$$
F^*(\eta)=\inner{\nabla F^{-1}(\eta)}{\eta}-F(\nabla F^{-1}(\eta)),
$$
where $\inner{\theta}{\eta}=\theta^\top\eta$ denotes the inner product in $\bbX$ (e.g., Euclidean inner product $\inner{\theta}{\eta}=\theta^\top\eta$ for $\bbX=\bbR^d$, the Hilbert-Schmidt inner product
$\inner{A}{B}:=\tr(AB^\top)$ where $\tr(\cdot)$ denotes the matrix trace for $\bbX=\mathrm{Mat}_{d,d}(\bbR)$, etc.), and
 $\eta\in H$  with $H$ the image of the gradient map $\nabla F:\Theta\rightarrow H$.
Convex conjugate $F^*(\eta)$ is of Legendre type (Theorem 1~\cite{Rockafellar-1967}).
Moreover, we have $\nabla F^*=\nabla F^{-1}$.

The gradient of a strictly convex function of Legendre type can also be interpreted as a  generalization
 the notion of monotonicity of a univariate function:
  A function $G:\bbX\rightarrow \bbR$ is said {\em strictly increasing co-monotone} if
$$
\forall \theta_1,\theta_2\in\bbX, \theta_1\not=\theta_2,\quad \inner{\theta_1-\theta_2}{G(\theta_1)-G(\theta_2)}>0.
$$
and strictly decreasing co-monotone if $-G$ is strictly increasing co-monotone.

\begin{proposition}[Gradient co-monotonicity]\label{prop:gradcomonotone}
The gradient functions $\nabla F(\theta)$ and $\nabla F^*(\eta)$ of the Legendre-type convex conjugates $F$ and $F^*$ in $\calF$ are strictly increasing co-monotone functions.
\end{proposition}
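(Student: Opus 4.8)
The plan is to prove that $\nabla F$ is strictly increasing co-monotone directly from strict convexity, and then deduce the same for $\nabla F^*$ either by the identical argument (since $F^*$ is itself of Legendre type by Theorem~1 of~\cite{Rockafellar-1967}) or by exploiting the relation $\nabla F^* = \nabla F^{-1}$.

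First I would establish the key inequality for $\nabla F$. Fix $\theta_1,\theta_2\in\Theta$ with $\theta_1\ne\theta_2$. By strict convexity and differentiability of $F$, the first-order characterization of strict convexity gives the two gradient inequalities
\begin{equation}\label{eq:firstorder}
F(\theta_2) > F(\theta_1) + \inner{\nabla F(\theta_1)}{\theta_2-\theta_1},\qquad
F(\theta_1) > F(\theta_2) + \inner{\nabla F(\theta_2)}{\theta_1-\theta_2}.
\end{equation}
Adding the two inequalities in Eq.~\ref{eq:firstorder} makes the $F(\theta_1)$ and $F(\theta_2)$ terms cancel, leaving $0 > \inner{\nabla F(\theta_1)}{\theta_2-\theta_1} + \inner{\nabla F(\theta_2)}{\theta_1-\theta_2}$, which rearranges to $\inner{\theta_1-\theta_2}{\nabla F(\theta_1)-\nabla F(\theta_2)} > 0$. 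This is exactly the strict increasing co-monotonicity of $G=\nabla F$. (One small point worth a sentence: one should justify the strict form of the first-order inequality for strictly convex differentiable functions — it follows from the standard midpoint argument, i.e., apply the non-strict first-order inequality on the segment and use that equality throughout would force $F$ to be affine on the segment, contradicting strict convexity.)

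For $\nabla F^*$, the cleanest route is to invoke Theorem~1 of~\cite{Rockafellar-1967} as already recalled in the excerpt: the convex conjugate $F^*$ of a Legendre type function $(\Theta,F)$ is again of Legendre type on its domain $H$. Hence the argument just given applies verbatim with $F$ replaced by $F^*$, yielding $\inner{\eta_1-\eta_2}{\nabla F^*(\eta_1)-\nabla F^*(\eta_2)}>0$ for all distinct $\eta_1,\eta_2\in H$. Alternatively, and perhaps more illuminatingly, one can derive it from the co-monotonicity of $\nabla F$ together with $\nabla F^* = \nabla F^{-1}$: writing $\eta_i=\nabla F(\theta_i)$ so that $\theta_i=\nabla F^*(\eta_i)$ and noting $\theta_1\ne\theta_2\iff \eta_1\ne\eta_2$ since $\nabla F$ is a bijection onto $H$, the inequality $\inner{\theta_1-\theta_2}{\eta_1-\eta_2}>0$ is literally symmetric in the roles of the two coordinate systems, so it reads simultaneously as co-monotonicity of $\nabla F$ and of $\nabla F^*$.

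There is no serious obstacle here; the statement is essentially the classical fact that the gradient of a strictly convex differentiable function is a strictly monotone operator, specialized to the Legendre-type setting. The only mild care needed is (i) to be explicit that strict convexity upgrades the first-order tangent inequality to a strict inequality, and (ii) to note that both $F$ and $F^*$ being of Legendre type (so that $\nabla F$ and $\nabla F^*$ are genuinely defined on full open domains and are mutually inverse bijections) is what guarantees the co-monotonicity holds globally on $\Theta$ and $H$ respectively, rather than merely locally — this is precisely the feature that distinguishes quasi-arithmetic averages built from Legendre-type generators from the merely local inverse supplied by the inverse function theorem.
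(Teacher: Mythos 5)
Your argument is correct and is essentially the paper's own proof in different clothing: the strict first-order inequalities you write down are exactly the statements $B_F(\theta_1:\theta_2)>0$ and $B_F(\theta_2:\theta_1)>0$ for the Bregman divergence, and adding them is precisely the Jeffreys--Bregman symmetrization $B_F(\theta_1:\theta_2)+B_F(\theta_2:\theta_1)=\inner{\theta_2-\theta_1}{\nabla F(\theta_2)-\nabla F(\theta_1)}$ that the paper uses, with the case of $F^*$ handled the same way via Rockafellar's theorem that $F^*$ is again of Legendre type. Your closing remark that the inequality $\inner{\theta_1-\theta_2}{\eta_1-\eta_2}>0$ is symmetric in the two coordinate systems is a nice bonus observation but does not change the substance of the argument.
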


\begin{proof}
We have to prove that
\begin{eqnarray}
\inner{\theta_2-\theta_1}{\nabla F(\theta_2)-\nabla F(\theta_1)}&>&0,\quad  \forall \theta_1\not=\theta_2 \in\Theta\label{eq:co1}\\
\inner{\eta_2-\eta_1}{\nabla F^*(\eta_2)-\nabla F^*(\eta_1)}&>&0, \quad \forall \eta_1\not=\eta_2\in H \label{eq:co2}
\end{eqnarray}

The inequalities follow by interpreting the terms of the left-hand-side of Eq.~\ref{eq:co1} and Eq.~\ref{eq:co2} as Jeffreys-symmetrization~\cite{JS-2019} of the dual Bregman divergences~\cite{Bregman-1967}:
\begin{eqnarray*}
B_F(\theta_1:\theta_2)&=& F(\theta_1)-F(\theta_2)-\inner{\theta_1-\theta_2}{\nabla F(\theta_2)}\geq 0,\\
B_{F^*}(\eta_1:\eta_2)&=& F^*(\eta_1)-F^*(\eta_2)-\inner{\eta_1-\eta_2}{\nabla F(\theta_2)}\geq 0,
\end{eqnarray*}
where the first equality holds if and only if $\theta_1=\theta_2$ and the second inequality holds iff $\eta_1=\eta_2$.
Indeed, we have the following Jeffreys-symmetrization of the dual Bregman divergences $B_{F}$ and $B_{F^*}$: 
\begin{eqnarray*}
B_{F}(\theta_1:\theta_2)+B_{F}(\theta_2:\theta_1)&=& \inner{\theta_2-\theta_1}{\nabla F(\theta_2)-\nabla F(\theta_1)}>0, \quad \forall \theta_1\not=\theta_2\\
B_{F^*}(\eta_1:\eta_2)+B_{F^*}(\eta_2:\eta_1)&=& \inner{\eta_2-\eta_1}{\nabla F^*(\eta_2)-\nabla F^*(\eta_1)}>0, \quad \forall \eta_1\not=\eta_2\\
\end{eqnarray*}

The symmetric divergences $\mathrm{JB}_F(\theta_1,\theta_2):=B_{F}(\theta_1:\theta_2)+B_{F}(\theta_2:\theta_1)$ and
$\mathrm{JB}_{F^*}(\eta_1,\eta_2):=B_{F^*}(\eta_1:\eta_2)+B_{F^*}(\eta_2:\eta_1)$ are called Jeffreys-Bregman divergences in~\cite{BR-2011}.
\end{proof}

\begin{remark}
Co-monotonicity can be interpreted as a multivariate generalization of monotone univariate functions:
A smooth univariate strictly increasing monotone function $f$ is such that $f'(x)>0$. 
Since $f'(x)=\lim_{h\rightarrow 0} \frac{f(x+h)-f(x)}{h}$, a strictly monotone function is such that $(x+h-x)\, (f(x+h)-f(x))>0$ for small enough $h>0$.
\end{remark}

Let us now define the weighted quasi-arithmetic averages (QAAs) as  follows:

\begin{definition}[Weighted quasi-arithmetic averages]\label{def:mqam}
Let $F:\Theta\rightarrow \bbR$ be a strictly convex and smooth real-valued function of Legendre-type in $\calF$.
The weighted quasi-arithmetic average of $\theta_1,\ldots,\theta_n$ and $w\in\Delta_{n-1}$ is defined by the gradient map $\nabla F$ as follows:
\begin{eqnarray}
M_{\nabla F}(\theta_1,\ldots,\theta_n;w) &:=&  
  {\nabla F}^*\left(\sum_{i=1}^n w_i \nabla F(\theta_i) \right),\\
&=& {\nabla F}^{-1}\left(\sum_{i=1}^n w_i \nabla F(\theta_i)\right),
\end{eqnarray}
where ${\nabla F}^*={\nabla F}^{-1}$ is the gradient map of the Legendre transform $F^*$ of $F$.
\end{definition}

We recover the usual definition of scalar QAMs $m_{f}$ (Definition~\ref{def:qam}) 
when $F(t)=\int^t_a f(u)\du$ for a strictly increasing or strictly decreasing and continuous function $f$: 
$m_f=M_{F'}$ (with $f^{-1}=(F')^{-1}$). 
Notice that we only need to consider $F$ to be strictly convex or strictly concave and smooth to define a multivariate QAM since $M_{\nabla F}=M_{-\nabla F}$.

The quasi-arithmetic averages can also be called $\nabla F$-means since we have
$$
\nabla F\left(M_{\nabla F}(\theta_1,\ldots,\theta_n;w)\right)=\sum_{i=1}^n w_i \nabla F(\theta_i)=A(\nabla F(\theta_1),\ldots, \nabla F(\theta_n);w),
$$
the ordinary weighted arithmetic mean on the $\nabla F$-representations.

Let us give some examples of vector and matrix quasi-arithmetic averages:

\begin{example}[Separable quasi-arithmetic average]\label{ex:vecex}
When the strictly convex $d$-variate real-valued function $F(\theta)$ is separable, i.e., $F(\theta)=\sum_{i=1}^d f_i(\theta_i)$ with $f_i:I_i\rightarrow\bbR$ for strictly convex and differentiable univariate functions $f_i(\theta_i)\in\mathcal{CM}(I_i)$, the global gradient maps are 
$\nabla F(\theta)=\vectorcol{f_1'(\theta_1)}{\vdots}{f_d'(\theta_d)}$ and 
$\nabla F^*(\eta)=\vectorcol{f_1'^{-1}(\eta_1)}{\vdots}{f_d'^{-1}(\eta_d)}=\nabla F^{-1}(\eta)$ so that 
 we have $M_{\nabla F}(\theta,\theta')=\vectorcol{M_{f_1'}(\theta_1,\theta_1')}{\vdots}{M_{f_d'}(\theta_d,\theta_d')}$, the componentwise quasi-arithmetic scalar means.
\end{example}

\begin{example}[Non-separable quasi-arithmetic average]\label{ex:categorical}
Consider the non-separable  $d$-variate real-valued function $F(\theta)=\log(1+\exp_{i=1}^d e^{\theta_i})=\LSE(0,\theta_1,\ldots,\theta_2)$.
This function called $\LSE_0^+(\theta_1,\ldots,\theta_2)=\LSE(0,\theta_1,\ldots,\theta_2)$ is strictly convex and differentiable of Legendre type~\cite{MCIG-2019}, with the reciprocal gradient maps
$\nabla F(\theta)=\vectorcol{\frac{e^{\theta_1}}{1+\sum_{j=1}^d e^{\theta_j}}}{\vdots}{\frac{e^{\theta_d}}{1+\sum_{j=1}^d e^{\theta_j}}}$ and
$\nabla F^*(\eta)=\vectorcol{\log\frac{\eta_1}{1-\sum_{j=1}^d \eta_j}}{\vdots}{\log\frac{\eta_d}{1-\sum_{j=1}^d \eta_j}}$.
We shall call this quasi-arithmetic average the categorical mean as it is induced by the cumulant function of  the family of categorical distributions 
(see \S\ref{sec:qaaigd}).
\end{example}

\begin{example}[Matrix example]\label{ex:logdet}
Consider the strictly convex function~\cite{webster1994convexity,boyd2004convex}: 
\begin{eqnarray*}
F&:& \SPD \rightarrow \bbR\\
\theta &\mapsto& -\log\det(\theta), 
\end{eqnarray*}
where $\det(\cdot)$ denotes the matrix determinant.
Function $F(\theta)$ is strictly convex and differentiable~\cite{boyd2004convex} on the domain of $d$-dimensional symmetric positive-definite matrices $\SPD$ (open convex cone).  We have
$F(\theta) = -\log\det(\theta)$,
$\nabla F(\theta) = -\theta^{-1} =:\eta(\theta)$,
$\nabla F^{-1}(\eta) = -\eta^{-1}=:\theta(\eta)$, and
$F^*(\eta) = \inner{\theta(\eta)}{\eta}-F(\theta(\eta))= -d-\log\det(-\eta)$,
where the dual parameter $\eta$ belongs to the $d$-dimensional negative-definite matrix domain, and the inner matrix product is the Hilbert-Schmidt inner product
$\inner{A}{B}:=\tr(AB^\top)$, where $\tr(\cdot)$ denotes the matrix trace.
It follows that 
$M_{\nabla F}(\theta_1,\theta_2)=2(\theta_1^{-1}+\theta_2^{-1})^{-1}$ 
is the matrix harmonic mean~\cite{matrixharmonic-1997} generalizing the scalar harmonic mean $H(a,b)=\frac{2ab}{a+b}$ for $a,b>0$.
Notice that the quasi-arithmetic center with respect to  $F^*$ is 
$M_{\nabla F^*}(\eta_1,\eta_2)=2\, \left(\eta_1^{-1}+\eta_2^{-1}\right)^{-1}$.
Thus in that case, we have $M_{\nabla F}=M_{\nabla F^*}$. 
That is, the gradient maps of convex conjugates yield the same quasi-arithmetic average
Other examples of matrix means are reported in~\cite{MatHellingerBhatia-2019}.
\end{example}

\section{Use of quasi-arithmetic averages in dually flat manifolds}\label{sec:qaaigd}

In this section, we shall elicit the roles of quasi-arithmetic averages in information geometry~\cite{IG-2016}, and report the invariance and equivariance  properties of quasi-arithmetic averages with respect to affine transformations from the lens of information geometry.

Let $(M,g,\nabla,\nabla^*)$ be a dually flat space (DFS) where $\nabla$ and $\nabla^*$ are the dual torsion-free flat affine connections such that $\frac{\nabla+\nabla^*}{2}$ is the Riemannian metric Levi-Civita connection $\nabla^g$ induced by $g$ (we have $\nabla^*=2\nabla^g-\nabla$ and ${\nabla^*}^*=\nabla$).
Let $F(\theta)$ and $F^*(\eta)$ denotes the Legendre-type potential functions with $\theta$ denoting the $\nabla$-affine coordinate system and $\eta$ denoting the $\nabla^*$-affine coordinate system. 
A point $P$ in a DFS can thus be represented  either by the coordinates $\theta(P)$ or by the coordinates $\eta(P)$.
Let us denote this duality of coordinates by $P\,\vectortwo{\theta(P)}{\eta(P)}$.
In a DFS, the dual canonical divergences~\cite{IG-2016} $D_{\nabla,\nabla^*}(P:Q)$ and $D_{\nabla,\nabla^*}^*(P:Q)=D_{\nabla^*,\nabla}(P:Q)$ between two points $P$ and $Q$ of $M$ can be expressed using the coordinate systems as dual Bregman divergences. 
We have the following identities: 
$$
D_{\nabla,\nabla^*}(P:Q)=B_F(\theta(P):\theta(Q))=B_{F^*}(\eta(Q):\eta(P))=D_{\nabla^*,\nabla}(Q:P).
$$

\subsection{Quasi-arithmetic averages in dual parameterizations of dual geodesics}\label{sec:geo}

\begin{figure}[hbtp]
\centering

\includegraphics[width=0.7\textwidth]{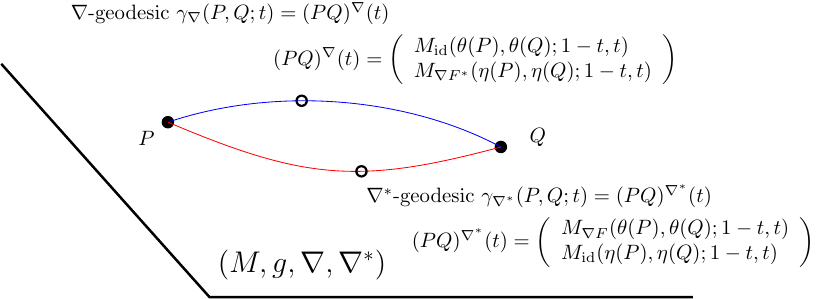}

\caption{The points on dual geodesics in a dually flat spaces have dual coordinates expressed with quasi-arithmetic averages.\label{fig:igdualgeoqams}}
\end{figure}

In a DFS $(M,g,\nabla,\nabla^*)=\DFS(F,\theta\in\Theta;F^*,\eta\in H)$, the primal geodesics $\gamma_\nabla(P,Q;t)$ are obtained as line segments in the $\theta$-coordinate system (because the Christoffel symbols of the connection $\nabla$ vanishes in the $\theta$-coordinate system) while the dual geodesics $\gamma_{\nabla^*}(P,Q;t)$ are line segments in the $\eta$-coordinate system (because the Christoffel symbols of the dual connection $\nabla^*$ vanishes in the $\eta$-coordinate system). 
The dual geodesics define interpolation schemes $(PQ)^\nabla(t)=\gamma_\nabla(P,Q;t)$ and $(PQ)^{\nabla^*}(t)=\gamma_{\nabla^*}(P,Q;t)$  between input points $P$ and $Q$ with 
$P=\gamma_\nabla(P,Q;0)=\gamma_{\nabla^*}(P,Q;0)$ and $Q=\gamma_\nabla(P,Q;1)=\gamma_{\nabla^*}(P,Q;1)$ when $t$ ranges in $[0,1]$.
We express the coordinates of the interpolated points on $\gamma_\nabla$ and $\gamma_{\nabla^*}$  using quasi-arithmetic averages as follows (Figure~\ref{fig:igdualgeoqams}):

\begin{eqnarray}
(PQ)^\nabla(t)&=&\gamma_\nabla(P,Q;t)=\vectortwo{M_\idfunc(\theta(P),\theta(Q);1-t,t)}{M_{\nabla F^*}(\eta(P),\eta(Q);1-t,t)},\\
(PQ)^{\nabla^*}(t)&=&\gamma_{\nabla^*}(P,Q;t)=\vectortwo{M_{\nabla F}(\theta(P),\theta(Q);1-t,t)}{M_\idfunc(\eta(P),\eta(Q);1-t,t)}.
\end{eqnarray}
Quasi-arithmetic averages were used by a geodesic bisection algorithm to approximate the circumcenter of the minimum enclosing balls with respect to the canonical divergence in a DFS in~\cite{nock2005fitting}.

\subsection{Quasi-arithmetic average coordinates of dual barycenters with respect to the canonical divergence}\label{sec:bary}
Consider a finite set of $n$ points $P_1,\ldots, P_n$ on the DFS $(M,g,\nabla,\nabla^*)$.
The points $P_i\,\vectortwo{\theta_i}{\eta_i}$ can be expressed in the dual coordinate systems either as $\theta(P_i)=\theta_i$ or $\eta(P_i)=\eta_i$.
The right centroid point $\bar C_R\in M$ defined by $\bar C_R=\arg\min_{P\in M} \sum_{i=1}^n \frac{1}{n} D_{\nabla,\nabla^*}(P_i:P)$ (or equivalently as a right-sided Bregman centroid~\cite{banerjee2005clustering} $\bar\theta_R=\arg\min_{\theta} \sum_{i=1}^n \frac{1}{n} B_F(\theta_i:\theta)$)  has dual coordinates
\begin{eqnarray}
\bar\theta_R &=& \theta(\bar C_R)  = \frac{1}{n}\sum_{i=1}^n \theta_i =M_{\idfunc}(\theta_1,\ldots,\theta_n),\\
\bar\eta_R &=& \nabla F(\bar\theta_R) = M_{\nabla F^*}(\eta_1,\ldots,\eta_n).
\end{eqnarray}
Similarly, the left centroid point 
$\bar C_L\in M$ defined by $\bar C_L=\arg\min_{P\in M} \sum_{i=1}^n \frac{1}{n} D_{\nabla,\nabla^*}(P:P_i)$ (or equivalently a left-sided Bregman centroid~\cite{nielsen2009sided} $\bar\theta_L=\arg\min_{\theta} \sum_{i=1}^n \frac{1}{n} B_F(\theta:\theta_i)$) has coordinates
\begin{eqnarray}
\bar\theta_L &=& M_{\nabla F}(\theta_1,\ldots,\theta_n),\\
\bar\eta_L &=& \nabla F(\bar\theta_L) =  M_{\idfunc}(\eta_1,\ldots,\eta_n).
\end{eqnarray}
Thus we have the two   dual sided centroids $\bar C_R$ and $\bar C_L$ (reference duality~\cite{zhang2013nonparametric}) on the dually flat manifold $M$ expressed using the dual coordinates as
$$
\bar C_R\,
\vectortwo{\bar\theta_L = M_{\nabla F}(\theta_1,\ldots,\theta_n)}{\bar\eta_L = \nabla F(\bar\theta_L) =  M_{\idfunc}(\eta_1,\ldots,\eta_n)},\quad
\bar C_L\,
\vectortwo{\bar\theta_L = M_{\nabla F}(\theta_1,\ldots,\theta_n)}{\bar\eta_L = \nabla F(\bar\theta_L) =  M_{\idfunc}(\eta_1,\ldots,\eta_n)}
$$

Let $\underline{\theta}=\frac{1}{n}\sum_{i=1}^n \theta_i$ and $\underline{\eta}=\frac{1}{n}\sum_{i=1}^n\eta_i$.
Then we have $\underline{\theta}=\nabla F^*(M_{\nabla F^*}(\eta_1,\ldots,\eta_n))$ and 
$\underline{\eta}=\nabla F(M_{\nabla F}(\theta_1,\ldots,\theta_n))$ .
The dual DFS centroids~\cite{pelletier2005informative} were studied as Bregman sided centroids and expressed as quasi-arithmetic averages in~\cite{nielsen2009sided}.

\begin{figure}
\centering

\includegraphics[width=0.7\textwidth]{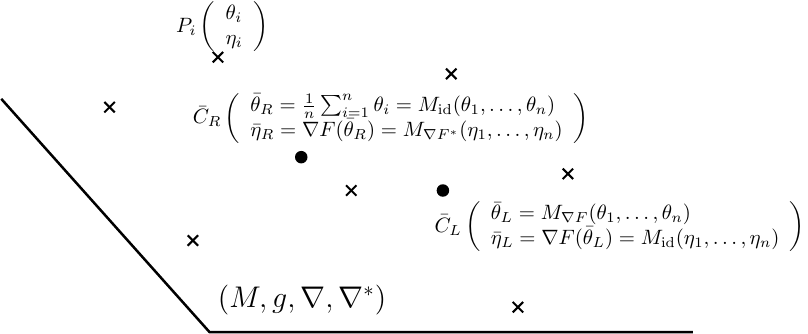}

\caption{Dual centroids in a dually flat spaces have dual coordinates expressed with  quasi-arithmetic averages.\label{fig:igdualqams}}
\end{figure}

Figure~\ref{fig:igdualqams} illustrates the dual centroids expressed using 
the quasi-arithmetic averages.

We may consider the barycenter of $n$ weighted points $P_1,\ldots, P_n$ (weight vector $w\in\Delta_{n-1}$) with respect to a Jensen divergence~\cite{BR-2011} $J_F$ defined as the minimization of $\sum_{i=1}^n w_i J_F(\theta,\theta_i)$.
In~\cite{BR-2011}, the following iterative algorithm was proposed: Let $\theta^{(0)}= \sum_{i=1}^n w_i\theta_i$, and iteratively update 
$\theta^{(t+1)}=M_{\nabla F}\left(\frac{\theta^{(t)}+\theta_i}{2},\ldots,\frac{\theta^{(t)}+\theta_n}{2};w\right)$.

\subsection{Tripartite duality of densities/divergences/means and dual quasi-arithmetic averages}\label{sec:dualQAA}
Banerjee et al.~\cite{banerjee2005clustering} proved a bijection between natural regular exponential families 
$\calE_F=\{p_\theta(x)=\exp(x\cdot\theta-F(\theta))\}$ with cumulant functions $F$ and ``regular'' dual Bregman divergences $B_{F^*}$ by rewriting the densities as $p_\theta(x)=p_\eta(x)=\exp(-B_{F^*}(x:\eta)+F^*(x))$ with $\eta=\nabla F(\theta)$  (using the Young equality $F(\theta)+F^*(\eta)-\inner{\theta}{\eta}=0$). 
Furthermore, a bijection between Bregman divergences $B_F$ and quasi-arithmetic averages  $M_{\nabla F}$ was informally mentioned in~\cite{nock2005fitting}. Using these bijections, we can cast the maximum likelihood estimator (MLE) of an exponential family $\calE_F$ as a dual right-sided Bregman centroid problem~\cite{nielsen2012k}.
Figure~\ref{fig:dualqams} summarizes the dualities between convex conjugate functions, exponential families and Bregman divergences, and maximum likelihood estimator, Bregman centroid expressed as multivariate QAMs.

The categorical mean of Example~\ref{ex:categorical} is induced by the gradient map of the cumulant function of the exponential family of categorical distributions~\cite{IG-2016} (the family of discrete distributions on a finite alphabet). 

\begin{figure}
\centering

\includegraphics[width=\textwidth]{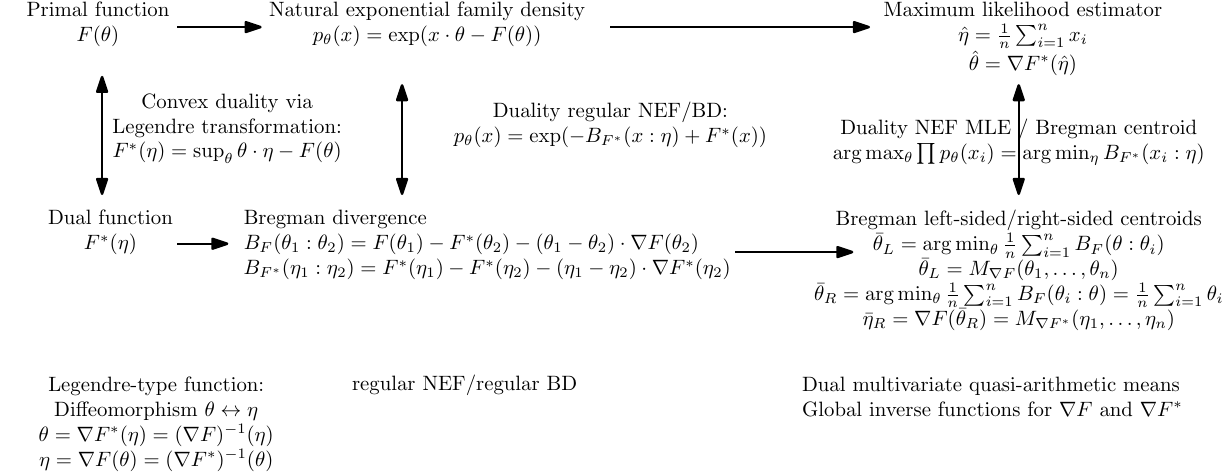}

\caption{Overview of the bijections between regular exponential families, Bregman divergences of Legendre-type, and quasi-arithmetic averages.\label{fig:dualqams}}
\end{figure}

A Legendre-type function $F$ induces two dual quasi-arithmetic weighted averages $M_{\nabla F}$ and $M_{\nabla F^*}$ by the gradient maps of the convex conjugates $F$ and $F^*$.

When $\nabla F=\nabla F^*=\nabla F^{-1}$ (meaning that the convex conjugate gradients are reciprocal to each others), we have $M_{\nabla F}=M_{\nabla F^*}$. This holds for example for the scalar and matrix harmonic means which are self-dual means with $\nabla F(x)=x^{-1}=\nabla F^*(x)$.

Consider the Mahalanobis divergence $\Delta^2$ (i.e., the squared Mahalanobis distance $\Delta$) as a Bregman divergence obtained for the  quadratic form generator $F_Q(\theta)=\frac{1}{2}\theta^\top Q\theta+c\theta+\kappa$ for a symmetric positive-definite $d\times d$ matrix $Q$, $c\in\bbR^d$ and $\kappa\in\bbR$.
We have:
$$
\Delta^2(\theta_1,\theta_2)=B_{F_Q}(\theta_1:\theta_2)=
\frac{1}{2}(\theta_2-\theta_1)^\top Q (\theta_2-\theta_1).
$$
When $Q=I$, the identity matrix, the Mahalanobis divergence coincides with the Euclidean divergence\footnote{The squared Euclidean/Mahalanobis divergence are not metric distances since they fail the triangle inequality.} (i.e., the squared Euclidean distance).
The Legendre convex conjugate is $F^*(\eta)=\frac{1}{2}\eta^\top Q^{-1}\eta=F_{Q^{-1}}(\eta)$, 
and we have $\eta=\nabla F_Q(\theta)=Q\theta$ and $\theta=\nabla F^*_Q(\eta)=Q^{-1}\eta$. 
Thus we get the following dual quasi-arithmetic averages:
\begin{eqnarray}
M_{\nabla F_Q}(\theta_1,\ldots,\theta_n;w) &=&
Q^{-1}\left(\sum_{i=1}^n w_i Q\theta_i\right)=
\sum_{i=1}^n w_i \theta_i=M_{\idfunc}(\theta_1,\ldots,\theta_n;w),\\
M_{\nabla F_Q^*}(\eta_1,\ldots,\eta_n;w)&=&
Q\left(\sum_{i=1}^n w_i Q^{-1}\eta_i\right)=
M_{\idfunc}(\eta_1,\ldots,\eta_n;w).
\end{eqnarray}

The dual quasi-arithmetic average functions $M_{\nabla F_Q}$ and $M_{\nabla F_Q^*}$ induced by a Mahalanobis Bregman generator $F_Q$ coincide since 
$M_{\nabla F_Q}=M_{\nabla F_Q^*}=M_{\idfunc}$. This means geometrically that the left-sided and right-sided centroids of the underlying canonical divergences match. The average $M_{\nabla F_Q}(\theta_1,\ldots,\theta_n;w)$ expresses the centroid $C=\bar{C}_R=\bar{C}_L$ in the $\theta$-coordinate system ($\theta(C)=\underline{\theta}$) 
and the average $M_{\nabla F_Q^*}(\eta_1,\ldots,\eta_n;w)$ expresses the same centroid in the $\eta$-coordinate system
 ($\eta(C)=\underline{\eta}$). In that case of self-dual flat Euclidean geometry, there is an affine transformation relating
 the $\theta$- and $\eta$-coordinate systems:$\eta=Q\theta$ and $\theta=Q^{-1}\eta$.
As we shall see this is because the underlying geometry is self-dual Euclidean flat space $(M,g_{\mathrm{Euclidean}},\nabla_{\mathrm{Euclidean}},\nabla_{\mathrm{Euclidean}}^*=\nabla_{\mathrm{Euclidean}})$ and that both dual connections coincide with the Euclidean connection (i.e., the Levi-Civita connection of the Euclidean metric). In this particular case, the dual coordinate systems are just related by affine transformations of one to another.

\subsection{Quasi-arithmetic averages and the inductive matrix geometric mean}\label{sec:inductive}

Consider $P$ and $Q$ two symmetric positive-definite (SPD) matrices of $\Sym_{++}(d)$.
By equipping the SPD cone $\Sym_{++}(d)$ with the Riemannian trace metric 
$$
g_P(X,Y)=\tr\left(P^{-1}X\, P^{-1}Y\right)
$$ 
where $X$ and $Y$ are symmetric matrices of the tangent plane $T_p$ identified with the vector space $\Sym(d)$, we
 get a Riemannian manifold $(\Sym_{++}(d),g)$ with geodesic distance~\cite{james1973variance}:
$$
\rho(P,Q)=\sqrt{  \sum_{i=1}^d \log^2 \lambda_i(P^{-\frac{1}{2}}\, Q\, P^{-\frac{1}{2}})},
$$
where $\lambda_i$ denote the $i$-th largest real-valued eigenvalue of the SPD matrix $P^{-\frac{1}{2}}\, Q\, P^{-\frac{1}{2}}$.
The Riemannian center of mass $P^*$ of $n$ points $P_1$, \ldots, $P_n$ (commonly called centroid or K\"archer mean) is defined as
$$
P^* = \arg\min_{P\in \Sym_{++}(d)} \frac{1}{n} \sum_{i=1}^n \rho^2(P_i,P).
$$
Since the SPD Riemannian manifold $(\Sym_{++}(d),g)$ is of non-positive sectional curvatures ranging in $[-\frac{1}{2},0]$, the Riemannian centroid $P^*$ is unique.
In particular, when $n=2$, we get
$$
P^* = P_1^{\frac{1}{2}}\, \left(P_1^{-\frac{1}{2}}\, P_2\, P_1^{-\frac{1}{2}}\right)^{\frac{1}{2}}\, P_1^{\frac{1}{2}}
$$
which coincides with one usual definition~\cite{ando2004geometric,bhatia2006riemannian} of the geometric matrix mean $G(P_1,P_2)$ 
where 
$$
G(P,Q)=Q^{\frac{1}{2}}  \left(Q^{-\frac{1}{2}}\, P\, Q^{-\frac{1}{2}}\right)^{\frac{1}{2}}  Q^{\frac{1}{2}},
$$

Nakamura~\cite{AHM-Nakamura-2001} considered the following iterations based on the arithmetic matrix mean $A(P,Q)=(P+Q)/2$ and harmonic matrix mean
 $H(P,Q)=2\, ((P^{-1}+Q^{-1}))^{-1}$:
\begin{eqnarray}
P_{t+1} &=& \frac{P_t+Q_t}{2}=:A(P_t,Q_t),\label{eq:ahm1}\\ 
Q_{t+1} &=& 2\, (P_t^{-1}+Q_t^{-1})^{-1}=:H(P_t,Q_t),\label{eq:ahm2}
\end{eqnarray}
initialized with $P_0=P$ and $Q_0=Q$.
Let $M(P,Q)=\lim_{t\rightarrow \infty} P_t$.
It is proven that $M(P,Q)=\lim_{t\rightarrow \infty} Q_t$, and
$$
M(P,I)=P^\frac{1}{2},
$$ 
the square-root matrix,
and
$$
M(P,Q)=G(P,Q).
$$
Furthermore, the convergence is quadratic~\cite{AHM-Nakamura-2001,atteia2001self}.

We can extend $(\Sym_++(d),g)$ as a dually flat space $(\Sym_++(d),g,\nabla,\nabla^*)$ where $\nabla$ is the flat Levi-Civita connection induced by the Euclidean metric 
$\labelg{E}_P(X,Y)=\tr(XY)$  and $\nabla^*$ is the flat Levi-Civita connection induced by the so-called inverse Euclidean metric~\cite{DFSAllMetricConnections-2019,geomixedEuclideanSPD-2022}
$\labelg{\IE}_P(X,Y)=\tr\left(P^{-2} X P^{-2} Y\right)$ (isometric to the Euclidean metric).
The non-flat trace metric $g$ is interpreted as a balanced bilinear form~\cite{DFSAllMetricConnections-2019}.
The midpoint $\nabla$-geodesic corresponds to the arithmetic mean and the midpoint $\nabla^*$-geodesic corresponds to the matrix harmonic mean.
The iterations of Eq.~\ref{eq:ahm1} and Eq.~\ref{eq:ahm2} converging to the Riemannian center of mass can thus be interpreted geometrically on the dually flat space $(\Sym_++(d),g,\nabla,\nabla^*)$ (see Figure~\ref{fig:inductivemean}),
with the geodesic midpoints  expressed as quasi-arithmetic averages $M_X$ and $M_{X^-1}$ which are the gradient maps of Legendre-type functions 
$\frac{1}{2}\tr(X^2)$ and $-\log\det(X)$, respectively.

\begin{figure}[hbtp]
\centering

\includegraphics[width=0.7\textwidth]{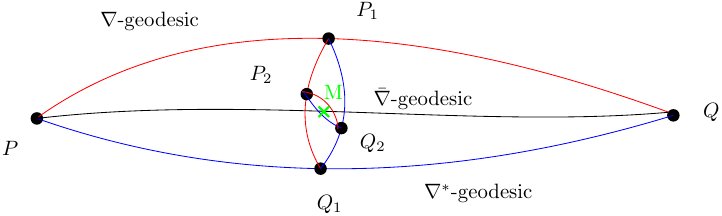}

\caption{The points on dual geodesics in a dually flat spaces have dual coordinates expressed with quasi-arithmetic averages.\label{fig:inductivemean}}
\end{figure}

The inductive process is further generalized to Hilbert spaces of functions with the arithmetic and harmonic matrix means being replaced by the arithmetic average function $A(p,q)=\frac{p+q}{2}$ and a harmonic-type function $hH(p,q)=\left(\frac{p^*+q^*}{2}\right)^*$ defined using the Legendre transform in~\cite{atteia2001self}.

\section{Invariance and equivariance properties of quasi-arithmetic averages}\label{sec:invariance}
Recall that a dually flat manifold~\cite{IG-2016} $(M,g,\nabla,\nabla^*)$ has a canonical divergence~\cite{amari1985differential} $D_{\nabla,\nabla^*}$ which can be expressed either as a primal Bregman divergence in the $\nabla$-affine coordinate system $\theta$ (using the  convex potential function $F(\theta)$)
 or as a dual Bregman divergence   in the $\nabla^*$-affine coordinate system $\eta$  (using the  convex conjugate potential function $F^*(\eta)$), or as dual Fenchel-Young divergences~\cite{nielsen2022statistical} using the mixed coordinate systems $\theta$ and $\eta$. 
The dually flat manifold $(M,g,\nabla,\nabla^*)$ (a particular case of Hessian manifolds~\cite{shima1997geometry} which admit a global coordinate system) is characterized by $(\theta,F(\theta);\eta,F^*(\eta))$ which we shall denote by
 $(M,g,\nabla,\nabla^*)\leftarrow \DFS(\theta,F(\theta);\eta,F^*(\eta))$ (or in short $(M,g,\nabla,\nabla^*)\leftarrow (\Theta,F(\theta))$). 
However, the choices of parameters $\theta$ and $\eta$ and potential functions $F$ and $F^*$ are not unique since they can be chosen up to affine reparameterizations and additive affine terms~\cite{IG-2016}: 
$(M,g,\nabla,\nabla^*)\leftarrow \DFS([\theta,F(\theta);\eta,F^*(\eta)])$ where $[\cdot]$ denotes the equivalence class  that has been called purposely the affine Legendre invariance in~\cite{nakajima2021dually} (see Section~\ref{sec:leginvar}):

\begin{itemize}
\item First, consider changing the potential function $F(\theta)$ by adding an affine term:
$\barF(\theta)=F(\theta)+\inner{c}{\theta}+d$. 
We have $\nabla \barF(\theta)=\nabla F(\theta)+c=\etabar$.
Inverting $\nabla \barF(x)=\nabla F(x)+c=y$, we get $\nabla \barF^{-1}(y)=\nabla F(y-c)$.
We check that $B_F(\theta_1:\theta_2)=B_{\bar F}(\theta_1:\theta_2)=D_{\nabla,\nabla^*}(P_1:P_2)$ with $\theta(P_1)=:\theta_1$ and 
$\theta(P_2)=:\theta_2$. It is indeed well-known that Bregman divergences modulo affine terms coincide~\cite{banerjee2005clustering}.
For the quasi-arithmetic averages $M_{\nabla\barF}$ and $M_{\nabla F}$, we thus obtain the following invariance property:
$M_{\nabla \barF}(\theta_1,\ldots;\theta_n;w)=M_{\nabla F}(\theta_1,\ldots;\theta_n;w)$.

\item Second, consider an affine change of coordinates $\bartheta=A\theta+b$ for $A\in\GL(d)$ and $b\in\bbR^d$,
and define the potential function $\barF(\bartheta)$ such that $\barF(\bartheta)=F(\theta)$. 
We have  $\theta=A^{-1}(\bartheta-b)$ and $\barF(x)=F(A^{-1}(x-b))$.
It follows that $\nabla \barF(x)=(A^{-1})^\top \nabla F(A^{-1}(x-b))$, and we check that $B_{\barF(\overline{\theta_1}:\overline{\theta_2})}=B_F(\theta_1:\theta_2)$:
\begin{eqnarray*}
B_{\barF(\overline{\theta_1}:\overline{\theta_2})}&=&\barF(\overline{\theta_1})+\barF(\overline{\theta_2})
-\inner{\overline{\theta_1}-\overline{\theta_2}}{\nabla \barF(\overline{\theta_2})},\\
&=& F(\theta_1)-F(\theta_2)-(A(\theta_1-\theta_2))^\top
(A^{-1})^\top  \nabla F(\theta_2),\\
&=& F(\theta_1)-F(\theta_2)-(\theta_1-\theta_2)^\top \underbrace{A^\top  (A^{-1})^\top}_{(A^{-1}A)^\top=I} \nabla F(\theta_2)= B_F(\theta_1:\theta_2).
\end{eqnarray*}
This highlights the invariance that $D_{\nabla,\nabla^*}(P_1:P_2)=B_F(\theta_1:\theta_2)=B_{\barF(\bartheta_1:\bartheta_2)}$, i.e., the canonical divergence does not change under a reparameterization of the $\nabla$-affine coordinate system.
For the induced quasi-arithmetic averages $M_{\nabla\barF}$ and $M_{\nabla F}$, 
we have $\nabla \barF(x)=(A^{-1})^\top \nabla F(A^{-1}(x-b))=y$, we calculate 
$x=\nabla \barF(x)^{1}(y)=A\,\nabla \barF^{-1}( ((A^{-1})^\top)^{-1} y)+b$, and we have
\begin{eqnarray*}
M_{\nabla\barF}(\overline{\theta_1},\ldots,\overline{\theta_n};w)&:=& \nabla\barF^{-1}(\sum_i w_i \nabla\barF(\overline{\theta_i})),\\
&=& (\nabla\barF)^{-1}\left( (A^{-1})^\top \sum_i w_i \nabla F(\theta_i)\right),\\
&=& A\, \nabla F^{-1}\left( \underbrace{((A^{-1})^\top)^{-1} (A^{-1})^\top}_{=I}  \sum_i w_i \nabla F(\theta_i) \right),\\
M_{\nabla\barF}(\overline{\theta_1},\ldots,\overline{\theta_n};w) &=&  A\, M_{\nabla F}(\theta_1,\ldots,\theta_n;w)+b
\end{eqnarray*}

More generally, we may define $\bar F(\bartheta)=F(A\theta+b)+\inner{c}{\theta}+d$ and get via Legendre transformation  
 $\bar F^*(\bareta)=F^*(A^*\eta+b^*)+\inner{c^*}{\eta}+d^*$ (with $A^*,b^*,c^*$ and $d^*$ expressed using $A,b,c$ and $d$ since these parameters are linked by the Legendre transformation).

\item Third, the canonical divergences should be considered relative divergences (and not absolute divergences), and defined according to a prescribed arbitrary ``unit'' $\lambda>0$.
Thus we can scale the canonical divergence by $\lambda>0$, i.e., $D_{\lambda,\nabla,\nabla^*}:=\lambda D_{\nabla,\nabla^*}$.
We have $D_{\lambda,\nabla,\nabla^*}(P_1:P_2)=\lambda B_F(\theta_1:\theta_2)=\lambda B_{F^*}(\eta_2:\eta_1)=\lambda Y_F(\theta_1:\eta_2)=\lambda Y_{F^*}(\eta_1:\theta_2)$, and $\lambda B_F(\theta_1:\theta_2)=B_{\lambda F}(\theta_1:\theta_2)$ (and $\nabla \lambda F=\lambda\nabla F$). 
We check the scale invariance of quasi-arithmetic averages: $M_{\lambda \nabla F}=M_{\nabla F}$.
\end{itemize}

Thus we end up with the following invariance and equivariance properties of the quasi-arithmetic averages which have been obtained from an information-geometric viewpoint: 

\begin{proposition}[Invariance and equivariance of quasi-arithmetic averages]\label{prop:equiinvariance}
Let $F(\theta)$ be a  function of Legendre type. Then  $\bar F(\bartheta):=\lambda (F(A\theta+b)+\inner{c}{\theta}+d)$ for $A\in\GL(d)$, $b,c\in\bbR^d$, $d\in\bbR^d$ and $\lambda\in\bbR_{>0}$ is a Legendre-type function, and we have $M_{\nabla \bar F}=A\, M_{\nabla F} + b$.
\end{proposition}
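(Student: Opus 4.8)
The plan is to verify the two assertions of the proposition in turn: first that $\bar F(\bartheta):=\lambda\bigl(F(A\theta+b)+\inner{c}{\theta}+d\bigr)$ is again of Legendre type, and second that its gradient map produces the equivariant quasi-arithmetic average $M_{\nabla\bar F}=A\,M_{\nabla F}+b$. Throughout I would use the convention $\bartheta=A\theta+b$, so that $\theta=A^{-1}(\bartheta-b)$, and write $\bar F(\bartheta)=\lambda\bigl(F(A^{-1}(\bartheta-b))+\inner{c}{A^{-1}(\bartheta-b)}+d\bigr)$ on the domain $\bar\Theta=A\Theta+b$.

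For the first claim, strict convexity and smoothness of $\bar F$ are immediate: $\theta\mapsto A^{-1}(\bartheta-b)$ is an affine isomorphism, precomposition with an affine bijection preserves strict convexity and differentiability, adding a linear term $\inner{c}{\theta}$ and a constant $d$ preserves both, and multiplying by $\lambda>0$ preserves both as well. The substantive point is the steepness condition of Eq.~\ref{eq:cond}. Here I would note that the affine map $\phi(\theta)=A\theta+b$ sends $\partial\Theta$ bijectively onto $\partial\bar\Theta$ and segments to segments: for $\bartheta\in\bar\Theta$ and $\bar{\bar\theta}\in\partial\bar\Theta$, the segment $\lambda\bartheta+(1-\lambda)\bar{\bar\theta}$ pulls back under $\phi^{-1}$ to the segment $\lambda\theta+(1-\lambda)\bar\theta$ with $\theta\in\Theta$, $\bar\theta\in\partial\Theta$. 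Since $\tfrac{d}{d\lambda}\bar F(\lambda\bartheta+(1-\lambda)\bar{\bar\theta})=\lambda\cdot\tfrac{d}{d\lambda}\bigl(F(\cdot)+\inner{c}{\cdot}\bigr)$ evaluated along the corresponding segment, and the linear term contributes only a bounded quantity $\inner{c}{\theta-\bar\theta}$ while the $F$-term blows up to $-\infty$ by hypothesis (scaled by $\lambda>0$), the limit is still $-\infty$. Hence $\bar F$ is of Legendre type; this is essentially the only nontrivial part.

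For the second claim I would compute $\nabla\bar F$ by the chain rule: $\nabla\bar F(\bartheta)=\lambda\bigl((A^{-1})^\top\nabla F(A^{-1}(\bartheta-b))+(A^{-1})^\top c\bigr)=\lambda(A^{-1})^\top\bigl(\nabla F(\theta)+c\bigr)$. Inverting the relation $\bary=\nabla\bar F(\bartheta)$ gives $\nabla F(\theta)+c=\tfrac{1}{\lambda}A^\top\bary$, hence $\theta=\nabla F^{-1}\bigl(\tfrac{1}{\lambda}A^\top\bary-c\bigr)$ and $\bartheta=A\,\nabla F^{-1}\bigl(\tfrac{1}{\lambda}A^\top\bary-c\bigr)+b$, i.e. $\nabla\bar F^{-1}(\bary)=A\,\nabla F^{-1}\bigl(\tfrac{1}{\lambda}A^\top\bary-c\bigr)+b$. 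Then I plug in $\bary=\sum_i w_i\nabla\bar F(\bartheta_i)=\lambda(A^{-1})^\top\sum_i w_i\bigl(\nabla F(\theta_i)+c\bigr)=\lambda(A^{-1})^\top\bigl(\sum_i w_i\nabla F(\theta_i)+c\bigr)$, using $\sum_i w_i=1$. Substituting, $\tfrac{1}{\lambda}A^\top\bary-c=A^\top(A^{-1})^\top\bigl(\sum_i w_i\nabla F(\theta_i)+c\bigr)-c=\sum_i w_i\nabla F(\theta_i)$, since $A^\top(A^{-1})^\top=(A^{-1}A)^\top=I$. Therefore
\[
M_{\nabla\bar F}(\bartheta_1,\ldots,\bartheta_n;w)=A\,\nabla F^{-1}\Bigl(\sum_i w_i\nabla F(\theta_i)\Bigr)+b=A\,M_{\nabla F}(\theta_1,\ldots,\theta_n;w)+b,
\]
which is the asserted equivariance. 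This computation is a direct generalization of the three bulleted special cases ($\lambda$-scaling, affine reparameterization, affine additive term) that precede the proposition, so I would either present it in full or simply remark that it follows by composing those three cases; the only genuine obstacle is the verification of the steepness condition for $\bar F$, and even that reduces cleanly to transporting segments through the affine bijection $\phi$.
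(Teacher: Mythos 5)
Your proof is correct and follows essentially the same route as the paper, which establishes the proposition by the three special cases (affine additive term, affine reparameterization of coordinates, positive scaling) worked out in the bullets immediately preceding it; your single composed chain-rule computation with the cancellation $A^\top (A^{-1})^\top=I$ and the use of $\sum_i w_i=1$ is exactly those three cases merged. The only genuine addition is your explicit verification of the steepness condition of Eq.~\ref{eq:cond} for $\bar F$ via transport of boundary segments under the affine bijection, which the paper asserts without proof and which your argument handles correctly.
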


This proposition generalizes Property~\ref{propr:invarqam} of scalar QAMs, and untangles the role of scale $\lambda>0$ from the other invariance roles brought by the Legendre transformation.

\section{Canonical Bregman divergences in dually flat spaces: Legendre affine invariance and divergence unit}\label{sec:leginvar}

By Eguchi contruction~\cite{eguchi1992geometry}, a Bregman divergence $B_F$ induces a unique dually flat space $(M,g,\nabla,\nabla^*)$ with dually flat divergence $D_{\nabla,\nabla^*}$ (a contrast function on the product manifold).
Conversely, we can reconstruct~\cite{IG-2016} a pair of dual potential functions $F(\theta)$ and $G(\eta)$ and their corresponding dual Bregman divergences $B_F$ and $B_G$ from a dually flat space $(M,g,\nabla,\nabla^*)$.
The reconstructed pair of  dual affine coordinate systems $\theta$ and $\eta$ and potential functions $F(\theta)$ and $G(\eta)$  are not unique and related by the Legendre-Fenchel transform (i.e., $G=F^*$).
Indeed, let us define the Bregman generator:
$$
\barF(\theta) = \lambda F(A\theta+b)+ \inner{c}{\theta}+d,
$$
for invertible matrix $A\in\GL(d,\bbR)$, vectors  $b, c\in\bbR^d$ and scalars $d\in\bbR$ and $\lambda\in\bbR_{>0}$.
When function $F(\theta)$ is twice differentiable and strictly convex, so is the function $\barF(\theta)$ since we have
$$
\nabla^2 \barF(\theta)= \lambda A^\top \nabla^2 (\nabla^2 F)(A\theta+b) A \succ 0.
$$ 

The gradient of the generator $\barF$ is
$$
\eta=\nabla\barF(\theta) = \lambda A^\top  {\nabla F}(A\theta+b)+c.
$$

Solving $\nabla\barF(\theta)=\eta$, we get the reciprocal gradient $\theta(\eta)=\nabla\barG(\eta)$: 
$$
\nabla\barG(\eta) = A^{-1}\nabla G\left( \frac{1}{\lambda} A^{-\top} (\eta-c) \right)-b.
$$

The Legendre convex conjugate is obtained as  
\begin{eqnarray*}
\barG(\eta) &=& \inner{\eta}{\nabla\barG(\eta)}-F(\nabla\barG(\eta)),\\
&=& \lambda^\star  G(A^\star\eta+b^\star)+\inner{c^\star}{\eta}+d^\star,
\end{eqnarray*}
with
\begin{eqnarray*}
\lambda^* &=& \lambda,\\
A^\star &=& \frac{1}{\lambda }A^{-1},\\
b^\star &=& -\frac{1}{\lambda}A^{-1}c,\\
c^\star &=& -A^{-1}b,\\
d^\star &=& \inner{b}{A^{-1}c}-d.
\end{eqnarray*}

We checked that  we have:
\begin{eqnarray*}
{\lambda^\star}^\star &=& \lambda,\\
{A^\star}^\star &=& A,\\
{b^\star}^\star &=& b,\\
{c^\star}^\star &=& c,\\
{d^\star}^\star &=& d.
\end{eqnarray*}

That is, the Legendre-Fenchel transform  is an involution.

Notice the interplay of $(A,b)$ with $(c,d)$ when taking the Legendre transform $\calL$.

To summarize, we have:
$$
\calL\left(\lambda F(A\cdot+b)+\inner{c}{\cdot}+d\right)(\eta) 
\xrightarrow{\text{Legendre transform}} 
 \lambda^\star  F^\star(A^\star\eta+b^\star)+\inner{c^\star}{\eta}+d^\star
$$

We check that we have:

\begin{equation}
B_F(\theta_1:\theta_1) = \frac{1}{\lambda}\, B_{\barF}(\bartheta_1:\bar\theta_2),
\end{equation}
where 
$$
\bartheta=A^{-1}(\theta-b).
$$

Geometrically speaking, the torsion-free connection $\nabla$ is flat: That is, there exists a coordinate system $\theta$ such that the Christoffel symbols of $\nabla$ vanish: $\Gamma(\theta)=0$, and hence the $\nabla$-geodesics are line segments in the $\theta$-coordinate system. $\theta$ is called a $\nabla$-affine coordinate system. The coordinate system is not unique as we can choose $\bartheta(p)=A^{-1}(\theta(p)-b)$ as another coordinate system.

Thus we have the dually flat divergence $D_{\nabla,\nabla^*}$ between two points $p_1$ and $p_2$ on $(M,g,\nabla,\nabla^*)$ (with $\theta$-coordinates $\theta_i=\theta(p_i)$ or $\bartheta$-coordinates $\bartheta_i=\bartheta(p_i)$) which can be computed equivalently as follows:
$$
D_{\nabla,\nabla^*}(p_1:p_2) = B_F(\theta_1:\theta_1)=\frac{1}{\lambda}\, B_{\barF}(\bartheta_1:\bar\theta_2),
$$
for any $A\in\GL(d,\bbR)$,   $b, c\in\bbR^d$ and $d, \lambda\in\bbR$.
The scalar $\lambda$ indicates the unit of the dually flat divergence since $\frac{1}{\lambda} B_F=B_{\frac{1}{\lambda} F}$.

\begin{example}
Let us consider the family of categorical distributions on a sample set of size $d$.
That is the family of multinomial distributions with one trial also called sometimes the family of multinoulli distributions.
The order of that exponential family is $D=d-1$.
We have $\theta_i=\log\frac{p_i}{p_d}$ and $F(\theta)=\log(1+\sum_{i=1}^{D}\exp(\theta_i))$ with 
$$
\nabla F(\theta)=\vecthree{ \frac{e^\theta_1}{1+\sum_{j=1}^D e^{\theta_j}} }{\vdots}{  \frac{e^\theta_D}{1+\sum_{j=1}^D e^{\theta_j}}   },
$$ 
and the reciprocal gradient is
$$
\nabla G(\eta)=\vecthree{ \log\frac{\eta_1}{1-\sum_{j=1}^D \eta_j} }{\vdots}{  \log\frac{\eta_D}{1-\sum_{j=1}^D \eta_j} }.
$$
\end{example}

For the special uni-order case of the generators $f(x)$, consider the function
$$
\barf(x) = \lambda f(ax+b)+cx+d,
$$
for $\lambda>0$, $a\not=0$, $c,d\in\bbR$.

Then we have
$$
\barf'(x)=\lambda a f'(ax+b)+c,
$$
and the reciprocal function is found by solving $\barf'(x)=y$:

$$
x(y)=\frac{1}{a}g'\left(\frac{y-c}{\lambda a}\right)-\frac{b}{a}=\barg'(y).
$$

The Legendre convex conjugate is thus 
$$
\barg(y)=x(y)y-\barf(x(y))=
\lambda g\left(\frac{y-c}{\lambda a}\right)-b\frac{y-c}{a}-d.
$$

We check that we have
$$
B_f(x_1:x_2)=\frac{1}{\lambda} B_{\barf}(\barx_1:\barx_2)=B_g(y_2:y_1)=\frac{1}{\lambda} B_{\barg}(\bary_2:\bary_1),
$$
where $\barx=\frac{x-b}{a}$ and
$\bary=\lambda g\left(\frac{y-c}{\lambda a}\right)-b\frac{y-c}{a}-d$.

\begin{example}
Let us consider the Poisson family $\{p_\lambda(x)\ :\ \lambda\in\bbR_{>0}\}$ where $\lambda$ denotes the intensity parameter of a Poisson distribution. 
The natural parameter is $\theta=\log\lambda$, and we get the cumulant function  $F(\theta)=e^\theta$ with $F'(\theta)=e^\theta$, $G'(\eta)=\log\eta$ and convex conjugate $G(\eta)=\eta\log\eta-\eta$.
\end{example}

\section{Quasi-arithmetic statistical mixtures and information geometry}\label{sec:qamJD}

\subsection{Definition of quasi-arithmetic statistical mixtures}

Consider a quasi-arithmetic mean $m_f$.
We consider $n$ probability distributions $P_1,\ldots, P_n$ all dominated by a measure $\mu$, and denote by 
$p_1=\frac{\mathrm{d}P_1}{\dmu},\ldots, p_n=\frac{\mathrm{d}P_n}{\dmu}$ their Radon-Nikodym derivatives.
Let us define statistical $m_f$-mixtures of $p_1,\ldots, p_n$:

\begin{definition}\label{def:qamix}
The $m_f$-mixture of $n$ densities $p_1,\ldots, p_n$ weighted by $w\in\Delta_{n}^\circ$ is defined by
$$
(p_1,\ldots,p_n;w)^{m_f}(x):= \frac{m_f(p_1(x),\ldots,p_n(x);w)}{\int m_f(p_1(x),\ldots,p_n(x);w)\dmu(x)}.
$$
\end{definition}

The quasi-arithmetic mixture (QAMIX for short) $(p_1,\ldots,p_n;w)^{m_f}$ generalizes the ordinary statistical mixture $\sum_{i=1}^d w_i p_i(x)$ when $f(t)=t$ and $m_f=A$ is the arithmetic mean. A statistical $m_f$-mixture can be interpreted as the $m_f$-integration of its weighted component densities, the densities $p_i$'s.
The power mixtures $(p_1,\ldots,p_n;w)^{m_p}(x)$ (including the ordinary and geometric mixtures) are called $\alpha$-mixtures 
in~\cite{Integration-Amari-2007} with $\alpha(p)=1-2p$ (or $p=\frac{1-\alpha}{2}$).
A nice characterization of the $\alpha$-mixtures is that these mixtures are the density centroids of the weighted mixture components with respect to  the $\alpha$-divergences~\cite{Integration-Amari-2007} (proven by calculus of variation):
$$
(p_1,\ldots,p_n;w)^{m_{\alpha}}=\arg\min_p w_i D_\alpha(p_i,p),
$$
where $D_\alpha$ denotes the $\alpha-$divergences~\cite{IG-2016,nielsen2014clustering}.
$m_f$-mixtures have also been used to define a generalization of the Jensen-Shannon divergence~\cite{JS-2019} between densities $p$ and $q$ as follows:
\begin{equation}
D_\JS^{m_f}(p,q):=\frac{1}{2}\left(D_\KL(p:(pq)^{m_f})+D_\KL(q:(pq)^{m_f})\right)\geq 0,
\end{equation}
where $D_\KL(p:q)=\int p(x)\log \frac{p(x)}{q(x)}\dmu(x)$ is the Kullback-Leibler divergence, and $(pq)^{m_f}:=(p,q;\frac{1}{2},\frac{1}{2})^{m_f}$. The ordinary JSD is recovered when $f(t)=t$ and $m_f=A$:
$$
D_\JS(p,q)=\frac{1}{2}\left(D_\KL\left(p:\frac{p+q}{2}\right)+D_\KL\left(q:\frac{p+q}{2}\right)\right).
$$
Quasi-arithmetic mixtures of two components have also been used to upper bound the probability of error in Bayesian hypothesis testing~\cite{nielsen2014generalized}.

Let us give some examples of parametric families of probability distributions that are closed under quasi-arithmetic mixturing:

\begin{itemize}
\item
Consider a natural exponential family~\cite{BN-2014} $\calE=\{p_\theta=\exp(\theta\cdot x-F(\theta))\st \theta\in\Theta \}$.
Function $F(\theta)=\log\int \exp(\theta\cdot x)\dmu$ is called the cumulant function (the log-normalizer function called  log-partition in statistical physics), and is of Legendre type when the exponential family is steep~\cite{BN-2014}.
Regular exponential families with are full exponential families with open natural parameter spaces $\Theta$ are steep.
We have Shannon entropy of density $p_\theta\in\calE$ expressed using the negative convex conjugate~\cite{HEF-2010} which is concave: 
$H(p_\theta)=-F^*(\eta)$ with $\eta=\nabla F(\theta)$.
Since exponential families are closed under geometric mixtures, i.e. $(p_{\theta_1}p_{\theta_2})^{G}=p_{\frac{\theta_1+\theta_2}{2}}$, we have Shannon entropy which can be expressed using the convex conjugate: 
$$
H((p_{\theta_1}p_{\theta_2})^{G})=-F^*\left(\nabla F \left(\frac{\theta_1+\theta_2}{2}\right)\right).
$$
We can rewrite $\nabla F \left(\frac{\theta_1+\theta_2}{2}\right)$ as the quasi-arithmetic average $M_{\nabla F^*}(\eta_1,\eta_2)$.
More generally, the geometric mixtures of $n$ densities of an exponential family belongs to that exponential family:
$$
(p_{\theta_1},\ldots,p_{\theta_n};w)^{G}\propto \prod_{i=1}^n p_{\theta_i}^{w_i}  = p_{\sum_{i=1}^n w_i\theta_i}.
$$
That is the normalization constant of $\prod_{i=1}^n p_{\theta_i}^{w_i}$ is $\exp(F(\sum_i w_i F(\theta_i))-\sum w_i F(\theta_i)=\exp(-J_F(\theta_1,\ldots,\theta_n;w))$, where $J_F$ is called the Jensen diversity index~\cite{burbea1982convexity,BR-2011}.

We may also build an exponential family by considering $n+1$ probability distributions $P_0, \ldots, P_n$ mutually absolutely continuous and all dominated by a reference measure $\mu$. Let $p_0,\ldots, p_n$ denote their Radon-Nikodym densities  such that  $\log\frac{p_1}{p_0},\ldots, \log\frac{p_{n}}{p_0}$ are linear independent.
Then consider the geometric mixture family:
$$
\calG=\left\{ (p_{0},\ldots,p_{n};w)^{G}= \st w\in\Delta_{n}^\circ \right\}.
$$
We have $(p_{0},p_1,\ldots,p_{n};w)^{G}\propto 
\exp\left(\sum_{i=1}^n w_i\log \frac{p_i}{p_0}\right) p_0(x)$.
We let the natural parameter be $\theta=(w_1,\ldots,w_n)\in \Delta_{n}^\circ$.
When $n=1$, Gr\"unwald~\cite{grunwald2007minimum} called $\calG$ a likelihood ratio exponential family (LREF).
Uni-order LREFs ($n=1$) have been studied in~\cite{brekelmans2020likelihood,nielsen2022revisiting}: It has the advantage of considering a non-parametric statistical model~\cite{zhang2013nonparametric} as a 1D exponential family model which yields a convenient framework
 for studying the statistical model $\calG$ under the lens of well-studied exponential families~\cite{BN-2014}.

\begin{remark}
We may consider another equivalent definition of the ordinary JSD~\cite{Sibson-1969,Lin-1991}  given  by
\begin{equation}
D_\JS(p,q) = H\left(\frac{p+q}{2}\right)-\frac{H(p)+H(q)}{2}\geq 0,\label{eq:ojsd}
\end{equation}
where  $H(p)=-\int p(x)\log p(x)\dmu(x)$ is the strictly concave Shannon entropy.
Thus we may consider the following generalization of the JSD:
\begin{equation}
H_\JS^{M_f,M_g}(p,q) = H\left((pq)^{m_f}\right)-m_g(H(p),H(q)),\label{eq:hjsd}
\end{equation}
where $m_f$ and $m_g$ are two quasi-arithmetic means. 
The first QAM is used to build a quasi-arithmetic mixture while the second QAM is used to average scalars.
When $f(t)=g(t)=t$, we recover the ordinary JSD with $m_f=m_g=A$.
Let us introduce the $(M,N)$-Jensen divergences~\cite{MNJensenBregman-2017} according to two generic symmetric bivariate means $M$ and $N$:
\begin{equation} 
J_F^{M,N}(\theta_1,\theta_2) = M(F(\theta_1),F(\theta_2))-F(N(\theta_1,\theta_2)).
\end{equation}
We recover the ordinary Jensen divergence~\cite{burbea1982convexity,BR-2011} when $M=N=A$:
$$
J_F(\theta_1,\theta_2)=J_F^{A,A}(\theta_1,\theta_2)=\frac{F(\theta_1)+F(\theta_2)}{2}-F\left(\frac{\theta_1+\theta_2)}{2}\right).
$$
Jensen divergences are non-negative and equal to zero only when $\theta_1=\theta_2$ when $F$ is a strictly convex function.
Similarly, by definition, $J_F^{M,N}(\theta_1,\theta_2)\geq 0$ with equality only when $\theta_1=\theta_2$ when $F$ is said $(M,N)$-convex~\cite{meanconvexity-2003} and~\cite{ConvexBook-2006} (Appendix~A).
A $(G,A)$-convex function is said log-convex and a $(G,G)$-convex function is said multiplicative convex.
We can test whether a function $g$ is strictly $(m_{f_1},m_{f_2})$-convex by checking whether  the function $f_2\circ g\circ f_1$ is strictly convex or not  (see the correspondence Lemma A.2.2 in~\cite{ConvexBook-2006}).
For densities $p_{\theta_1}$ and $p_{\theta_2}$ belonging to a same exponential family $\calE$, we have
\begin{eqnarray*}
H_\JS^{G,A}(p_{\theta_1},p_{\theta_2}) &=& H((p_{\theta_1}p_{\theta_2})^G)-A(H(p_{\theta_1}),H(p_{\theta_2})),\\
&=& -F^*\left(M_{\nabla F^*}(\eta_1,\eta_2)\right)
+\frac{F^*(\eta_1)+F^*(\eta_2)}{2}
=J^{\nabla F^*,A}_{F^*}(\eta_1,\eta_2),
\end{eqnarray*}
where $J^{\nabla F^*,A}$ is the $(\nabla F^*,A)$-Jensen divergence defined according to two means.
Thus we have $H_\JS^{G,A}(p_{\theta_1},p_{\theta_2})\geq 0$ iff $F^*$ is $(\nabla F^*,A)$-convex.

To see how $H_\JS^{G,A}$ differs from $D_\JS^{G}$ defined in~\cite{JS-2019}, 
let us introduce the cross-entropy between $p$ and $q$: $H^\times(p:q)=-\int p(x)\log q(x)\dmu(x)$.
Then $D_\KL(p:q)=H^\times(p:q)-H(p)$ with $H(p)=H^\times(p:p)$, and we have
\begin{eqnarray}
D_\JS^{G}(p:q) &=& \frac{1}{2}\left(D_\KL(p:(pq)^G)+ D_\KL(q:(pq)^G)\right)\geq 0,\\
&=& \frac{1}{2}\left(H^\times(p:(pq)^G)-H(p)+ H^\times(q:(pq)^G)-H(q)\right)\geq 0,\\
&=& H^\times((pq)^A:(pq)^G)-\frac{H(p)+H(q)}{2}\geq 0.
\end{eqnarray}
However, we have $H_\JS^{G,A}(p:q)=H^\times((pq)^G:(pq)^G)-\frac{H(p)+H(q)}{2}$.
Therefore, the dissimilarity  $H_\JS^{G,A}(p:q)$ can be potentially negative when $H^\times((pq)^G:(pq)^G)\leq H^\times((pq)^A:(pq)^G)$.
\end{remark}

\item
Let $p_0, p_1,\ldots, p_n$ denotes $n+1$ linearly independent densities, and consider their (arithmetic/standard) mixture family~\cite{IG-2016}:
$\calM=\{m_{\theta}(x)=\sum_{i=0}^n w_i p_i(x) \st w\in\Delta_{n}^\circ\}$ 
with $\theta=(w_1,\ldots, w_{n})\in\Delta_{n-1}^\circ$ (and $w_0=1-\sum_{i=1}^{1}\theta_i$). 
The Shannon negentropy $F(\theta)=-H(m_\theta)$ is a Legendre type function~\cite{MCIG-2019}.
Since the mixture of two densities of a mixture family $\calM$ belongs to $\calM$ (i.e., $\frac{m_{\theta_1}+m_{\theta_2}}{2}=m_{\frac{\theta_1+\theta_2}{2}}$), we have $H\left(\frac{m_{\theta_1}+m_{\theta_2}}{2}\right)=H\left(m_{\frac{\theta_1+\theta_2}{2}}\right)=-F\left({\frac{\theta_1+\theta_2}{2}}\right)$.
It follows that $D_\JS(m_{\theta_1}:m_{\theta_2})=J_F(\theta_1:\theta_2)\geq 0$.

\item Consider the family of scale Cauchy distributions $
\calC=\left\{ p_s(x)= \frac{1}{\pi s} \frac{1}{1+\left(\frac{x}{s}\right)^2} \st s\in\bbR_{>0}  \right\}$.
The harmonic mixture $(p_{s_1}p_{s_2})^H$ of two Cauchy distributions is a Cauchy distribution $p_{s_{12}}$~\cite{nielsen2014generalized} 
with parameter $s_{12}=\sqrt{\frac{s_1s_2^2+s_2s_1^2}{s_1+s_2}}$.
More generally, the harmonic mixtures of $n$ scale Cauchy distributions is a Cauchy distribution.

\item The power mixture of central multivariate $t$-distributions is a central multivariate $t$-distribution~\cite{nielsen2014generalized} .
\end{itemize}
\begin{figure}[hbtp]
\centering

\includegraphics[width=0.7\textwidth]{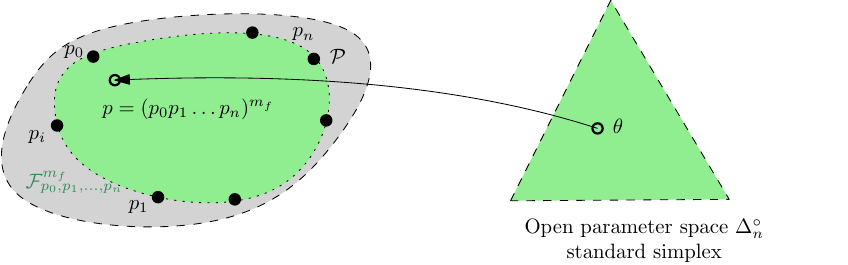}

\caption{Statistical models in the quasi-mixture family are parametrized by a vector in the open standard simplex.\label{fig:qamixs}}
\end{figure}

In general, we may consider quasi-arithmetic paths between densities on the space $\calP$ of probability density functions with a common support all dominated by a reference measure.
On $\calP$, we can build a parametric statistical model called a $m$-mixture family of order $n$ as follows:
$$
\calF_{p_0,p_1,\ldots,p_n}^{m_f}:=\left\{ 
(p_0,p_1,\ldots,p_n;(\theta,1))^{m_f} \st \theta\in\Delta^\circ_n \right\}.
$$
In particular, power $q$-paths have been investigated in~\cite{masrani2021q} with applications in annealing importance sampling and other Monte Carlo methods.
The information geometry of such a density space with quasi-arithmetic paths has been investigated in~\cite{eguchi2015path} by
  considering quasi-arithmetic means with respect to a monotone increasing and concave function. See also~\cite{vigelis2017existence,eguchi2018information}.
	
\subsection{The $\nabla$-Jensen-Shannon divergences}\label{sec:nablaJS}
	
We conclude by giving a geometric definition of a generalization of the Jensen-Shannon divergence on $\calP$ according to an arbitrary affine connection~\cite{IG-2016,zhang2013nonparametric} $\nabla$:

\begin{definition}[Affine connection-based $\nabla$-Jensen-Shannon divergence]\label{def:geojsd}
Let $\nabla$ be an affine connection on the space of densities $\calP$, and $\gamma_\nabla(p,q;t)$ the geodesic linking density $p=\gamma_\nabla(p,q;0)$ 
to density $q=\gamma_\nabla(p,q;1)$. Then the $\nabla$-Jensen-Shannon divergence is defined by:
\begin{equation}\label{eq:geojsd}
D^\JS_\nabla(p,q) := \frac{1}{2}\left( D_\KL\left(p:\gamma_\nabla\left(p,q;\frac{1}{2}\right)\right) + D_\KL\left(q:\gamma_\nabla\left(p,q;\frac{1}{2}\right)\right) \right).
\end{equation}
\end{definition}	

When $\nabla=\nabla^m$ is chosen as the mixture connection~\cite{IG-2016}, we end up with the ordinary Jensen-Shannon divergence since 
$\gamma_{\nabla^m}(p,q;\frac{1}{2})=\frac{p+q}{2}$. When $\nabla=\nabla^e$, the exponential connection, we get the geometric Jensen-Shannon divergence~\cite{JS-2019} since $\gamma_{\nabla^e}(p,q;\frac{1}{2})=(pq)^G$ is a statistical geometric mixture.
We may choose the $\alpha$-connections of information geometry to define $\nabla$-Jensen-Shannon divergences (see Figure~\ref{fig:geomidpoints}).

\begin{figure}[hbtp]
\centering

\includegraphics[width=0.3\textwidth]{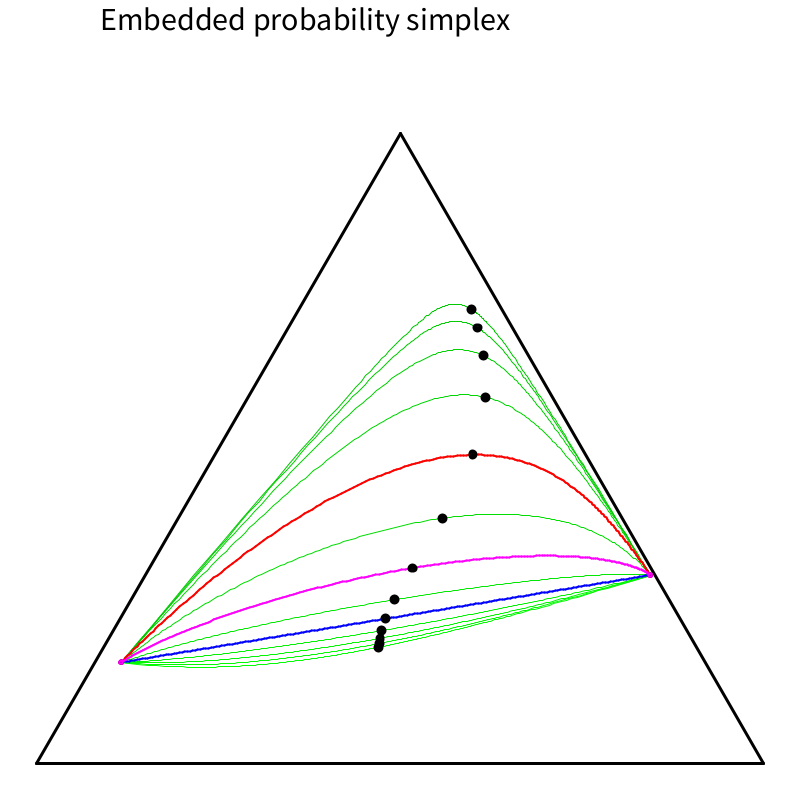}
\includegraphics[width=0.3\textwidth]{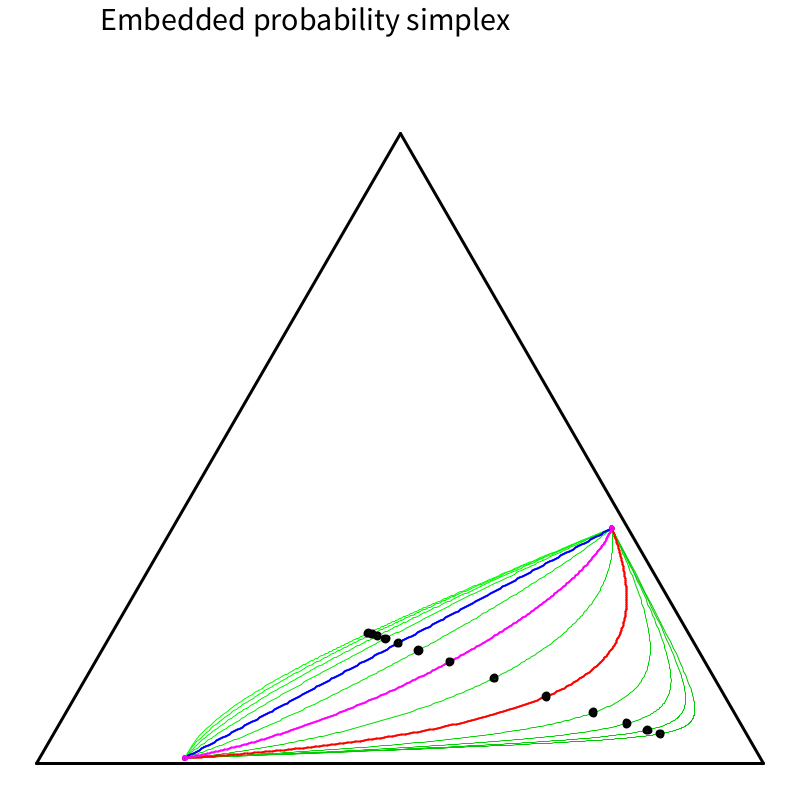}
\includegraphics[width=0.3\textwidth]{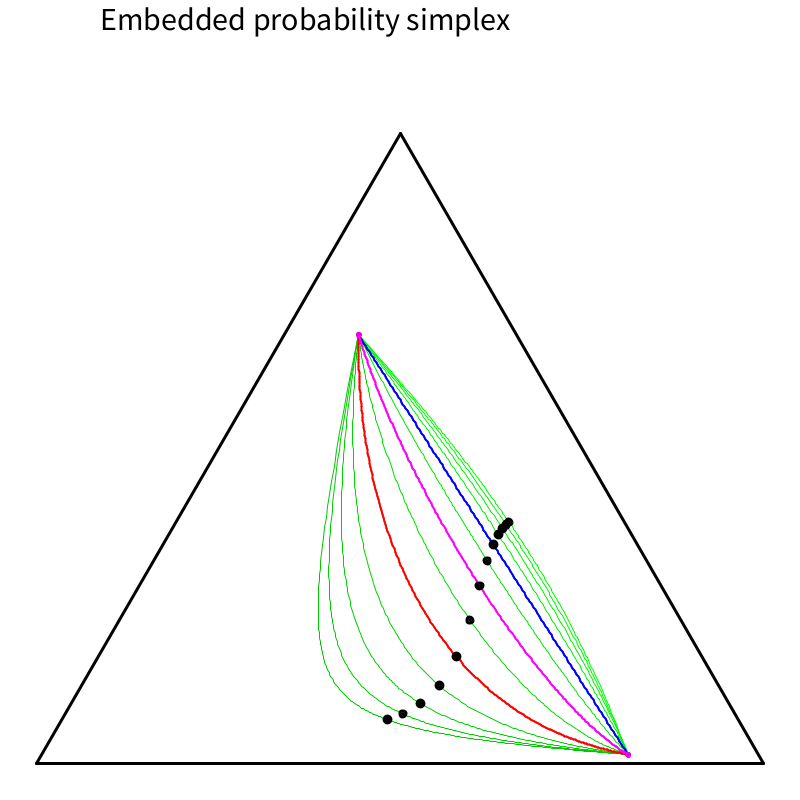}\\
\includegraphics[width=0.3\textwidth]{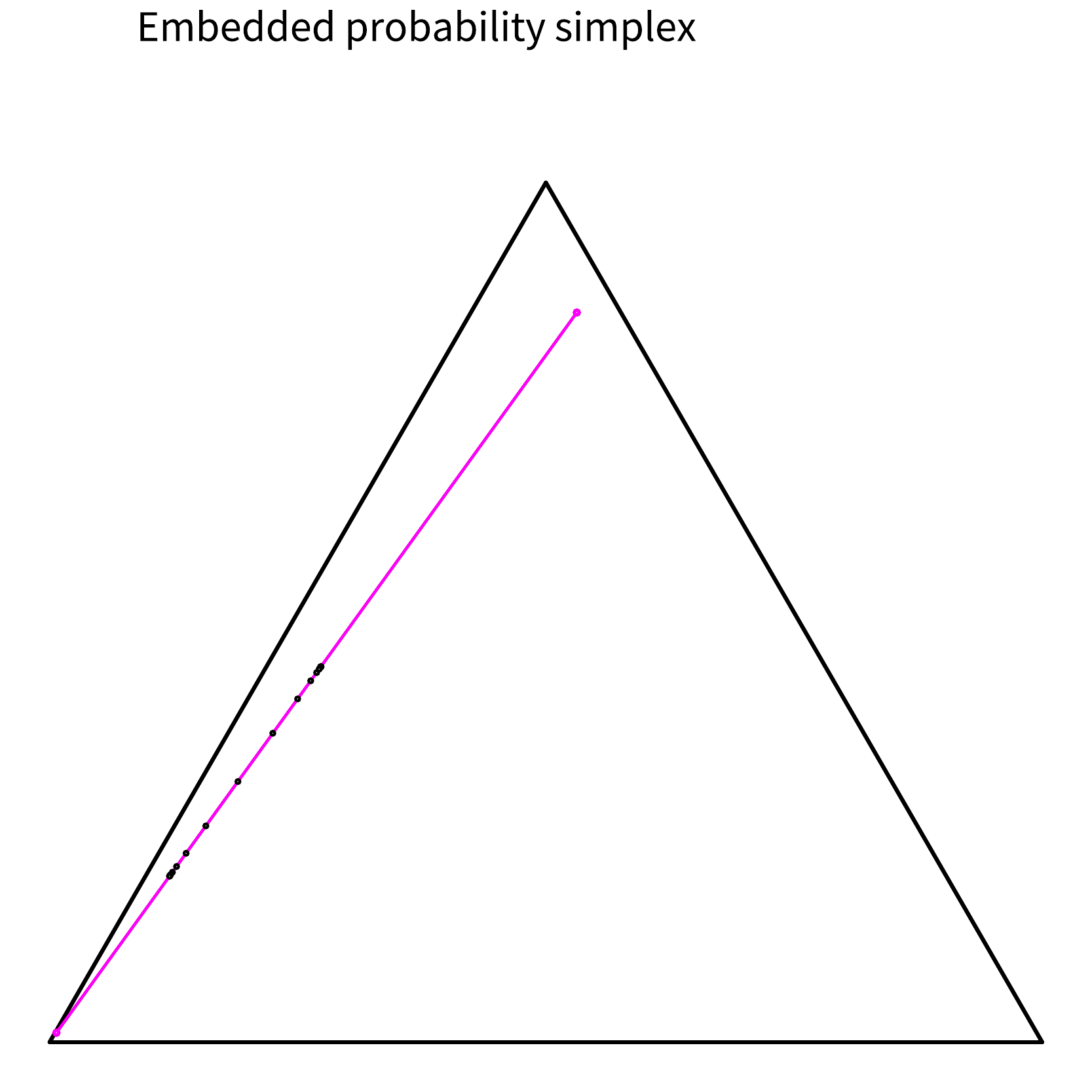}
\includegraphics[width=0.3\textwidth]{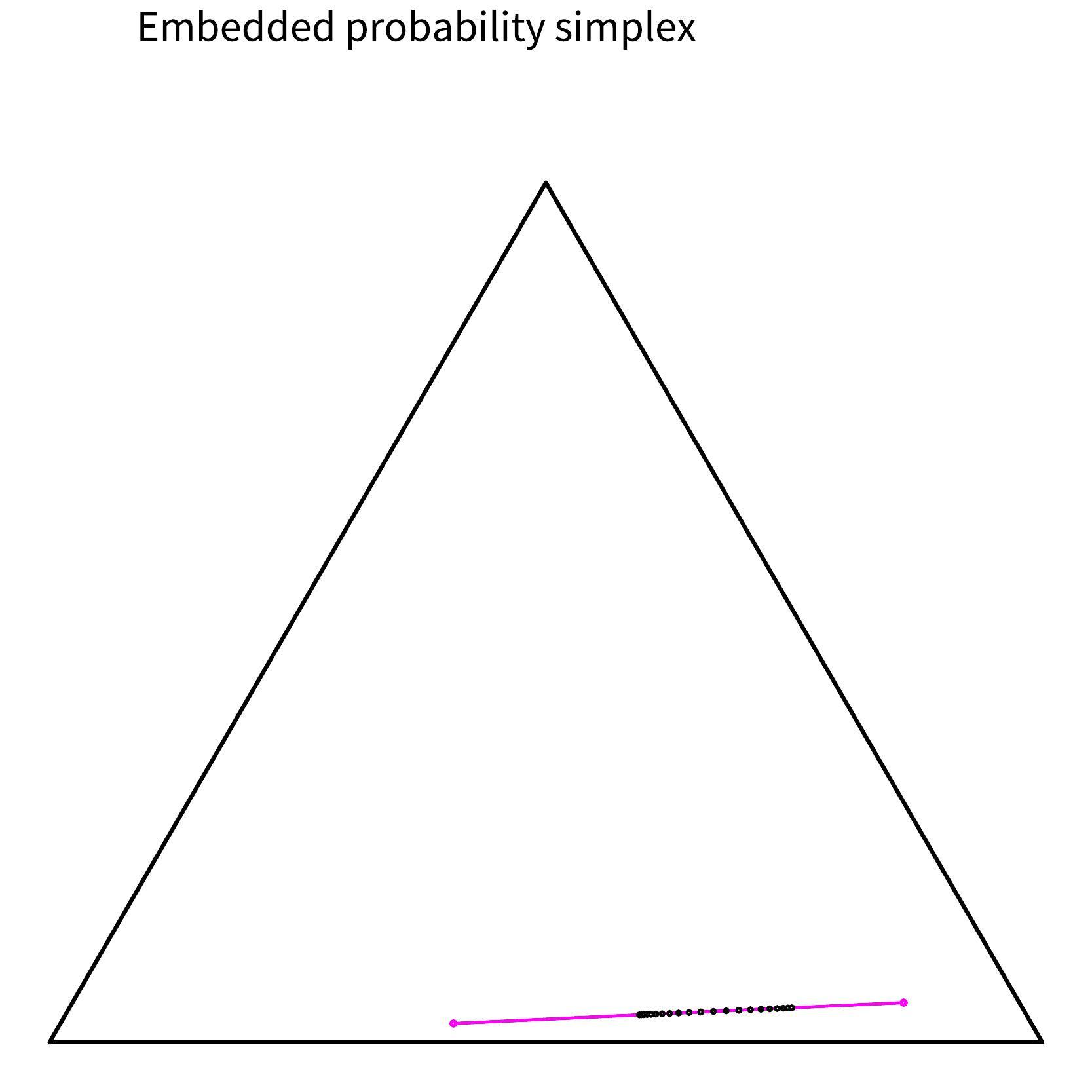}
\includegraphics[width=0.3\textwidth]{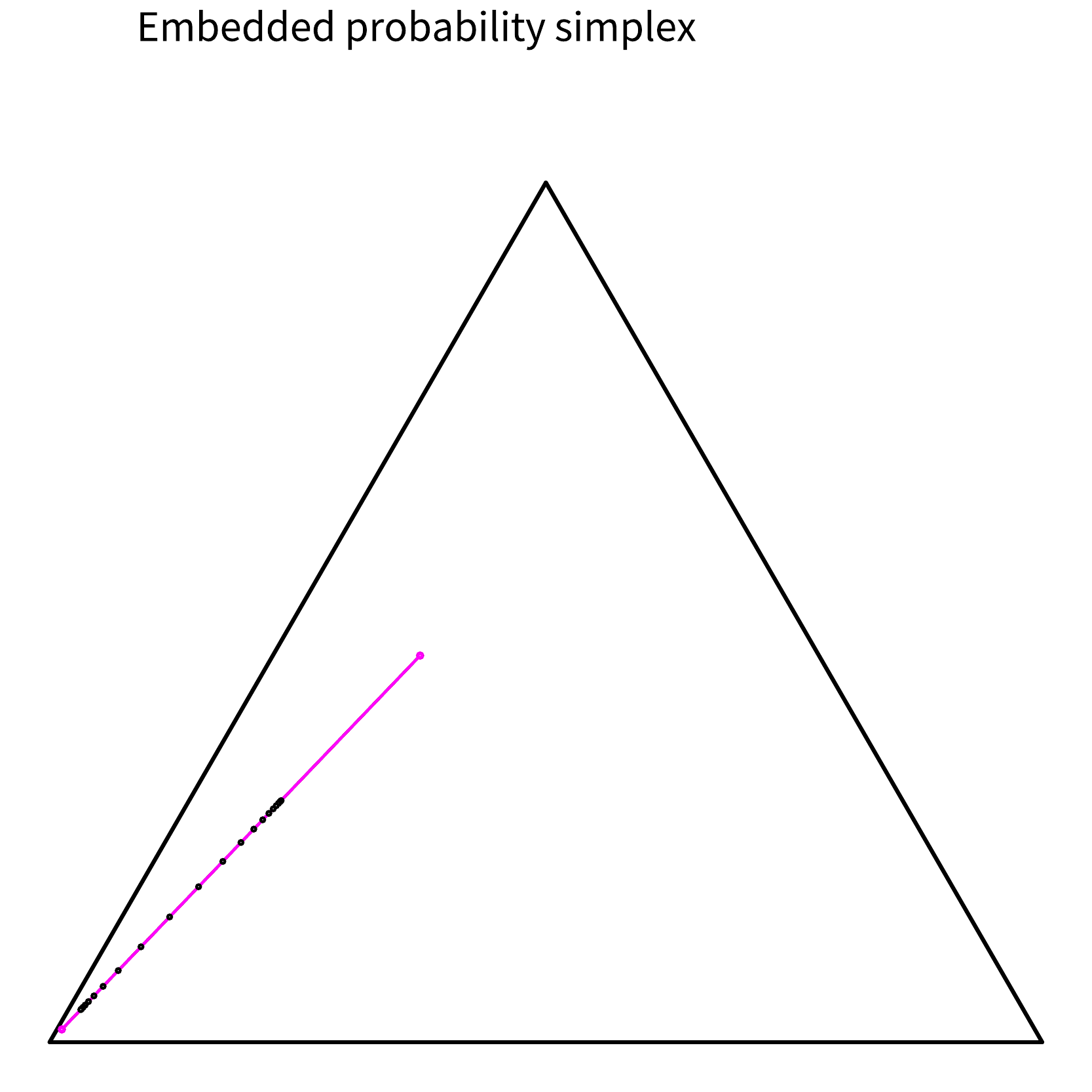} 

\caption{Top: Some $\alpha$-geodesics rendered in the 2D probability simplex (equilateral triangle sitting in 3D) with their midpoints displayed.
Bottom: When the points $p$ and $q$ are collinear with a vertex of the probability simplex, the $\alpha$-geodesics coincide with the line $(pq)$.
\label{fig:geomidpoints}}
\end{figure}

When the space of densities $\calP$ is a exponential family or a mixture family with carrying a dually flat structure $(\calP,g_{\mathrm{Fisher}},\nabla^m,\nabla^{e})$ where $g_{\mathrm{Fisher}}$ denotes the Riemannian Fisher information metric~\cite{IG-2016}, we have the Kullback-Leibler divergence $D_\KL$ which can be expressed using the canonical divergence $D_{\nabla^m,\nabla^{e}}$, and the Jensen-Shannon divergence can be written geometrically as
$$
D_\JS(p:q)=D^\JS_\nabla(P,Q):=\frac{1}{2}\left(D_{\nabla^m,\nabla^{e}}\left(p:\gamma_{\nabla^m}(p,q;\frac{1}{2})\right) + 
D_{\nabla^m,\nabla^{e}}\left(q:\gamma_{\nabla^m}(p,q;\frac{1}{2})\right) \right),
$$
where $P$ and $Q$ denote the points on $\calP$ representing the densities $p$ and $q$.

Furthermore, we may consider the $\alpha$-connections~\cite{IG-2016} $\nabla^\alpha$ of parametric or non-parametric statistical models, and skew the geometric Jensen-Shannon divergence to define the $\beta$-skewed $\nabla^\alpha$-JSD:
$$
D^\JS_{\nabla^\alpha,\beta}(p,q)=
\beta D_\KL(p:\gamma_{\nabla^\alpha}(p,q;\beta)) + (1-\beta) D_\KL(q:\gamma_{\nabla^\alpha}(p,q;\beta)).
$$

\section{Concluding remarks}

In this paper, we presented two generalizations of the scalar quasi-arithmetic means~\cite{Inequalities-1952} through the lens of information geometry, and discussed some of their applications:

\begin{itemize}
	\item The first generalization  of  scalar quasi-arithmetic means consisted in defining pairs of {\em quasi-arithmetic averages} induced by the gradient maps of pairs of Legendre-type functions.
	These dual quasi-arithmetic averages are used in information geometry to express points on dual geodesics and sided barycenters in the dually affine 
	$\theta$- and $\eta$-coordinate systems. 
	Furthermore, we proved that $M_{\nabla F}=M_{\nabla \bar F}=M_{[\nabla F]}$ where $[\nabla \bar F]$ denotes the equivalence class of Legendre type functions such that $\bar F(\bar\theta)=\lambda F(\theta+b)+\inner{c}{\theta}+d \sim F(\theta)$.
	This property generalizes the well-known fact that  quasi-arithmetic means $M_f=M_g$ iff $g=\lambda f+c$ and distinguishes the  scaling invariance by $\lambda>0$ with the Legendre invariance by $c$.
	
	\item The second generalization of quasi-arithmetic means defined {\em statistical quasi-arithmetic mixtures} by normalizing   quasi-arithmetic means of their densities:
	In particular, we showed how  exponential families are closed under geometric mixtures, and described a generic way to build exponential families of order $n$ from geometric mixtures of $n+1$ linear independent log ratio densities $\log \frac{p_i}{p_0}$.
	The  statistical geometric mixture family  construction holds similarly for other quasi-arithmetic mixture families. 
	Last, we gave a generic geometric definition of the Jensen-Shannon divergence based on affine connections which generalizes both the ordinary Jensen-Shannon divergence~\cite{Lin-1991} and the geometric Jensen-Shannon divergence~\cite{JS-2019}.
	This demonstrates the rich interplay of divergences with information geometry.
\end{itemize}

 \bibliographystyle{plain}
 \bibliography{MNJSDBibV4}
 
\end{document}